\documentclass[sigconf, nonacm]{acmart}
\usepackage{graphicx}
\usepackage{amsmath,amsthm}
\usepackage{algorithm}
\usepackage{algpseudocode}
\usepackage{multirow}
\usepackage{subfiles}
\usepackage{multicol}
\usepackage{makecell}
\usepackage{bm}
\usepackage{enumitem}
\newcommand\vldbdoi{XX.XX/XXX.XX}

\newcommand\vldbauthors{\authors}
\newcommand\vldbtitle{\shorttitle} 
\newcommand\vldbavailabilityurl{URL_TO_YOUR_ARTIFACTS}
\newcommand\vldbpagestyle{plain}

\def\ojoin{\setbox0=\hbox{$\bowtie$}%
  \rule[-.02ex]{.25em}{.4pt}\llap{\rule[\ht0]{.25em}{.4pt}}}
\def\leftouterjoin{\mathbin{\ojoin\mkern-5.8mu\bowtie}}

\usepackage{listings}
\usepackage{xcolor}

\lstdefinelanguage{PACSQL}{
  morekeywords={
    CREATE, ALTER, TABLE, PAC, PROTECTED, LINK,
    ADD, DROP, REFERENCES
  },
  sensitive=true,
  morecomment=[l]{--},
  morestring=[b]'
}

\usepackage{style}

\lstdefinestyle{cppstyle}{
  language=C++,
  basicstyle=\ttfamily\footnotesize,
  keywordstyle=\bfseries\color{blue!70!black},
  commentstyle=\color{green!50!black},
  stringstyle=\color{red!60!black},
  numberstyle=\tiny\color{gray},
  numbers=left,
  frame=single,
  backgroundcolor=\color{blue!5},
  breaklines=true,
  tabsize=2,
  morekeywords={uint64_t, void}
}

\begin{document}
\title{SIMD-PAC-DB: Pretty Performant PAC Privacy}

\settopmatter{authorsperrow=4}

\author{Ilaria Battiston}
\affiliation{%
  \institution{CWI}
  \country{The Netherlands}
}
\email{ilaria@cwi.nl}

\author{Dandan Yuan}
\affiliation{%
  \institution{CWI}
  \country{The Netherlands}
}
\email{dandan@cwi.nl}

\author{Xiaochen Zhu}
\affiliation{%
  \institution{MIT}
  \country{USA}
}
\email{xczhu@mit.edu}

\author{Peter Boncz}
\affiliation{%
  \institution{CWI}
  \country{The Netherlands}
}
\email{boncz@cwi.nl}

\begin{abstract}
This work presents a highly optimized implementation of PAC-DB, a recent and promising database privacy model. We prove that our SIMD-PAC-DB can compute the same privatized answer with just a single query, instead of the 128 stochastic executions against different 50\% database sub-samples needed by the original PAC-DB.
Our key insight is that every bit of a hashed primary key can be seen to represent membership of such a sub-sample. We present new algorithms for approximate computation of stochastic aggregates based on these hashes, which, thanks to their SIMD-friendliness, run up to 40x faster than scalar equivalents. We release an open-source DuckDB community extension which includes a rewriter that PAC-privatizes arbitrary SQL queries. Our experiments on TPC-H, Clickbench, and SQLStorm evaluate thousands of queries in terms of performance and utility, significantly advancing the ease of use and functionality of privacy-aware data systems in practice.
\end{abstract}

\maketitle

\pagestyle{\vldbpagestyle}
\begingroup\small\noindent\raggedright\textbf{Citation Format:}\\
\vldbauthors. \vldbtitle. 
\href{https://arxiv.org/abs/2603.15023}{https://arxiv.org/abs/2603.15023}, 2026.
\endgroup
\begingroup
\renewcommand\thefootnote{}\footnote{\noindent
This work is licensed under the Creative Commons BY-NC-ND 4.0 International License. Visit \url{https://creativecommons.org/licenses/by-nc-nd/4.0/} to view a copy of this license. For any use beyond those covered by this license, obtain permission by emailing \href{mailto:ilaria@cwi.nl}{ilaria@cwi.nl}. Copyright is held by the authors. \\
\href{https://arxiv.org/abs/2603.15023}{https://arxiv.org/abs/2603.15023}, 2026. \\
}\addtocounter{footnote}{-1}\endgroup

\ifdefempty{\vldbavailabilityurl}{}{
\vspace{.3cm}
\begingroup\small\noindent\raggedright\textbf{Artifact Availability:}\\
The source code, data, and/or other artifacts have been made available at \url{github.com/cwida/pac}.
\endgroup
}

\section{Introduction}
\label{sec:introduction}
\begin{figure}
  \centering\vspace*{2mm}
  \includegraphics[width=\linewidth]{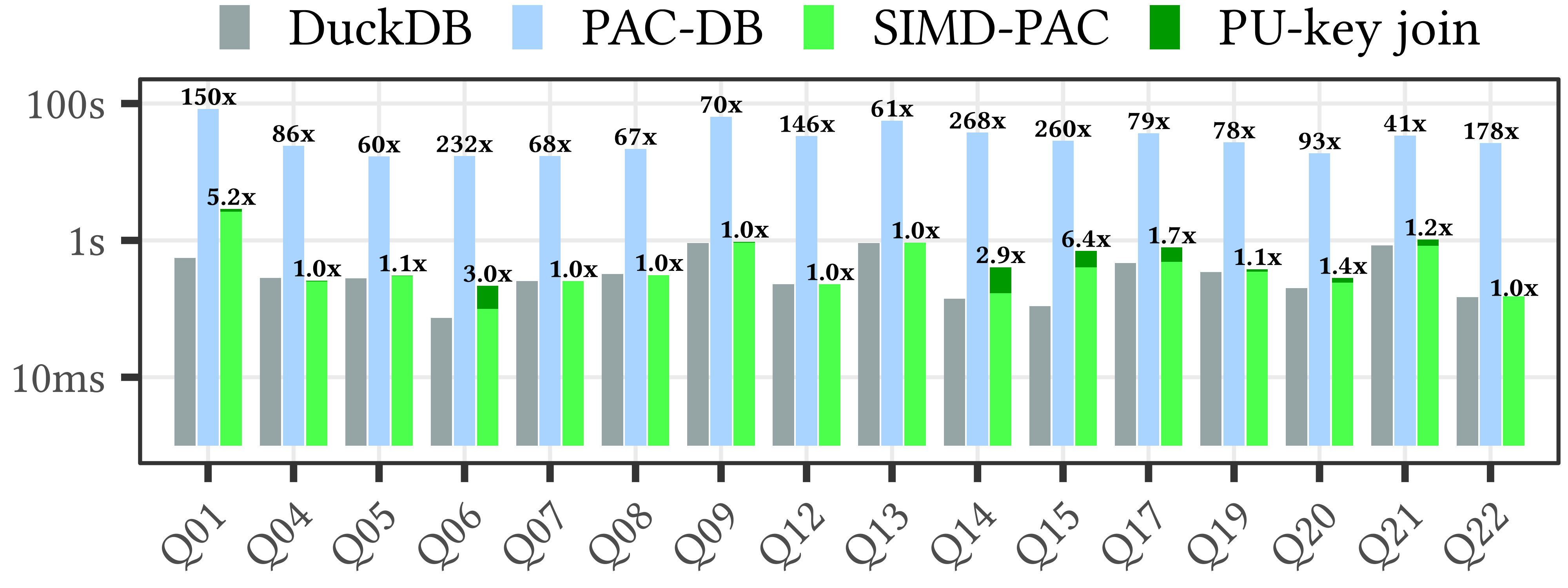}\vspace*{-4mm}\caption{Our SIMD-friendly PAC (light green) incurs limited overhead compared to default DuckDB (TPC-H SF30, Macbook M2 Pro), making it much more usable than PAC-DB (light blue). Slowdowns mostly stem from unavoidable extra joins (dark green) -- aggregation-heavy Q01 being the only exception. PAC rejects Q10 and 18, which release customer data. Q02, 03, 16 are also not shown, as PAC is a no-op there.}\vspace*{-5mm} 
  \label{fig:tpch}
\end{figure}

Recent work on Probably Approximately Correct (PAC) Privacy~\cite{hanshen2023crypto,sridhar2025pac,zhu2026pacadversarial} introduced black-box privatization of arbitrary mechanisms by automatically estimating the noise required to meet a given privacy budget. \textbf{PAC-DB} adapts this concept to privacy-aware query processing~\cite{sridhar2026pacdb}, protecting against membership inference attacks in query sessions. 
PAC-DB has promising applicability in our field, because it can (i) privatize queries automatically and (ii) provide answers with good utility, as it exploits result-specific value distributions, rather than dataset-wide domain bounds, to generate the required noise. 
In contrast, query processing with Differential Privacy (DP) needs trusted curators, as it requires detailed white-box sensitivity analysis of each query. PAC-DB shifts the burden from human analysis to computational effort, enabling privacy guarantees for (almost) arbitrary queries.
The PAC-DB framework\footnote{Original PAC-DB implementation: github.com/michaelnoguera/pacdb} executes a query $m=128$ times on $m$ database subsets (``possible worlds''), each obtained by a uniform 50\% sampling of the rows of the {\em Privacy Unit} (PU), e.g. \texttt{customer} table. This sampling is extended to non-cyclic database schemas towards {\em dependent tables}, which can reach the PU table via foreign key joins, e.g., \texttt{orders}, \texttt{lineitem} in TPC-H. PAC-DB observes the value distribution from collecting the $m$ query results and applies noise based on its variance. Consequently, its runtime cost is $O(m)$ larger than default query execution --- as also can be seen in Figure~\ref{fig:tpch} that shows default, PAC-DB and SIMD-PAC-DB on TPC-H scale factor 30.

We present \textbf{DuckDB-PAC}, our open-source DuckDB extension that executes SQL queries under PAC privacy with overheads of $\sim$2x --- at least one order of magnitude faster. It consists of a query compiler that automatically translates (or rejects) complex SQL queries based on a privacy specification in extended SQL schema metadata. Our breakthrough is that, rather than executing $m$ times, we run just 1 query using \textbf{stochastic aggregation functions} that, in one go, compute $m$ partial results, one for each sub-sample. These results may be approximations, as the overall aggregate result will get noised anyway. The enabling insight is that each bit of a hashed primary key ({\bf key\_hash}) of the PU table can be seen to represent membership in a sub-sampled possible world. In Section 5, we show how a {\bf single SIMD instruction} can perform all additions in the below loop for a stochastic count aggregate (using $m$=64 here):

\vspace*{-0mm}\noindent{\small\begin{verbatim}
PacCountUpdate(uint64_t count[64], uint64_t key_hash)
  for(int j=0; j<64; j++) // for all 64 "possible worlds"
    count[j] += (key_hash >> i) & 1; // ++ if row exists in it
\end{verbatim}}\vspace*{-0mm}

The paper is organized as follows. In Section 2, we recap PAC privacy and PAC-DB and introduce SIMD-PAC-DB. Our DuckDB-PAC extension is outlined in Section 3, focusing on its SQL parser extension and algebraic rewriter. The semantics of PAC-DB and SIMD-PAC-DB are formalized in Section 4. We present SIMD-optimized stochastic aggregation algorithms in Section 5. The experimental evaluation, related work, and conclusion form Sections 6, 7, and 8.

\newpage
\section{From PAC-DB to SIMD-PAC-DB}\label{sec:2}

\vspace*{0mm}\noindent{\bf PAC Privacy.}
We first establish the background for PAC privacy~\cite{hanshen2023crypto}. Given a universe of records $U\subset\mathcal X^*$, PAC privacy first constructs $m$ subsets of $U$, denoted $\mathcal{S} = \{S_1,\ldots,S_m\}$, structured such that each record $x\in U$ belongs to $m/2$ subsets. Let $P_S$ denote the uniform distribution over $\mathcal{S}$. 
A mechanism $M:\mathcal X^*\to\mathcal Y$ satisfies $(\delta,\rho,P_S)$-PAC privacy under an attack criterion $\rho:\mathcal X^*\times\mathcal X^*\to\{0,1\}$ if an informed adversary $A$, who knows $P_S$ and $M$, cannot estimate $S$ from $M(S)$ with a success rate exceeding $1-\delta$ under criterion $\rho$. Formally, the adversary's posterior success rate is bounded as:
$$
1-\delta_{A}:=\mathrm{Pr}_{S\sim P_S,\, Y\gets M(S),\,\hat S\gets A(Y)}[\rho(\hat S,S)=1]\leq1-\delta.
$$
Even before observing $M(S)$, an informed adversary has a prior success rate, $1-\delta_0 :=\max_{Q}\Pr_{S\sim P_S,\hat S\sim Q}[\rho(\hat S,S)=1]$, determined by $P_S$ and $\rho$. Because this prior is fixed, we can restrict the adversary's posterior success rate $(1-\delta_A)$ by limiting their \emph{posterior advantage}, the divergence between their prior and posterior success rates. Using KL divergence, this advantage is bounded by the mutual information (MI) between the input and output~\cite{hanshen2023crypto}:
\begin{equation}\label{eq:delta<=mi}
D_\text{KL}(\bm{1}_{1-\delta_A}\,\|\,\bm{1}_{1-\delta_0})
\leq \mathrm{MI}(S;M(S)).
\end{equation}
Hence, one can privatize an algorithm against arbitrary adversaries by setting a MI budget $B>0$ and enforcing $\mathrm{MI}(S;M(S))\leq B$.

For a real-valued function $f:\mathcal X^*\to\mathbb R$, we can achieve this by calibrating noise to the function's variance under $P_S$. Given a budget $B$, we release $\tilde f_B(S)=f(S)+\Delta$, where
$\Delta\sim\mathcal N\left(0, \frac{1}{2B}\mathsf{Var}_{S\sim P_S}[f(S)]\right)$.
This ensures $\mathrm{MI}(S;\tilde f_B(S))\leq B$, as shown by \citet{sridhar2025pac}.

For a sequence of possibly adaptive functions $f_1,\ldots,f_d:\mathcal X^*\to \mathbb R$, we aim to bound the MI between $S\sim P_S$ and $(\tilde f_{1,B}(S),\ldots,\tilde f_{d,B}(S))$ to limit adversarial inference after observing $d$ releases evaluated on a persistent secret $S$. This requires adaptive noise calibration~\cite{zhu2026pacadversarial}, where we maintain a posterior distribution $P_i\in\mathbb R^m$ of the secret, initialized as $P_0=P_S$. For the $i$-th release, the mechanism evaluates the variance of the $m$ possible outputs of $f_i$ under $P_{i-1}$ (rather than $P_S$) 
and release $\tilde f_{i,B}=f_i(S)+\Delta_i$ where $\Delta_i\sim\mathcal N(0,\frac{1}{2B}\mathsf{Var}_{S\sim P_{i-1}}[f_i(S)])$
for a per-release budget $B$. After the noisy release, the mechanism performs a Bayesian update to compute the posterior distribution $P_{i}$ for the next release. This ensures the total MI over $d$ adaptive releases compose linearly~\cite{zhu2026pacadversarial}: $\mathrm{MI}(S; (\tilde f_{1,B}(S),\ldots,\tilde f_{d,B}(S)))\leq dB$.

For any inference attack, the total MI guarantee directly translates into a concrete upper bound on the adversary's success rate via Eq.~\eqref{eq:delta<=mi}. By first measuring the adversary's prior success rate ($1-\delta_0$) and bounding their posterior advantage by the total MI, we can provably constrain their final inference capabilities. 
Consider a standard membership inference attack (MIA)~\cite{carlini2022membership} where the adversary attempts to infer whether record $x^*\in U$ is present in the secret subset $S$. Because each record appears in exactly half of the subsets, the adversary's prior success rate before observing any releases is exactly 50\% for any target $x^*$, \emph{regardless of $m$}. Applying Eq.~\eqref{eq:delta<=mi}, we can rigorously cap the MIA success rate: a total MI budget of $1/4$ limits MIA success to $\approx 84\%$; tightening MI to $1/128$ lowers the bound to 53\%. Crucially, this guarantee applies simultaneously to all records in $U$ because of a uniform prior across the universe. Finally, we note that as this bound assumes a computationally unbounded adversary with full knowledge of $P_S$, it provides a conservative privacy guarantee that naturally extends to realistic attackers.
\newpage

\vspace*{0mm}\noindent{\bf Relational Privacy Unit.}
To enable PAC privacy for relational databases, we first designate a table as the \emph{privacy unit} (PU) table to represent the universe $U$. Every single row in this table acts as a PU, which is the atomic unit of privacy protection --- intuitively, it represents the smallest entity whose presence or absence in the database must be protected. 
Consequently, the $m$ subsets $\mathcal{S} = \{S_1, \ldots, S_m\}$ are constructed by sub-sampling these PUs. Because relational databases rely on referential integrity, evaluating a query over a sub-sample of PUs requires more than just sub-sampling the PU table; it necessitates the cascading inclusion of rows in other tables that are transitively linked to the sampled PUs via foreign key relationships. We therefore require a mechanism to associate arbitrary rows with the PUs that contribute to them.

Let $T$ be a relation scanned by a query. If $T = U$, each tuple trivially corresponds to exactly one PU. Otherwise, $T$ may be linked to $U$ through a chain of foreign key relationships:
\[
T \xrightarrow{fk_1} T_1 \xrightarrow{fk_2} \cdots \xrightarrow{fk_k} U.
\]
In this case, each tuple $t \in T$ is associated with the unique PU identifier $u.pk$ reached by following the foreign key path. 

\vspace*{2mm}\noindent{\bf PAC-DB Example.}
In this TPC-H example, \texttt{customer} is designated as the PU and PAC-DB generates $m$ independent random samples (``worlds") of the PU population (this paper uses $m=64$), where each PU is independently included with probability $1/2$ in each sample. Then, it runs the query against each sample. We also consider the purchasing behavior of customers sensitive: if \texttt{orders} or \texttt{lineitem} are queried, we need to perform equi-joins on their FK-PKs towards the sub-sampled \texttt{customer} PU table, and then execute the original query logic. All this results in $m$ query executions. From the $m$ outputs, PAC-DB computes the variance and adds calibrated Gaussian noise to produce the released result. While PAC-DB establishes a foundation for PAC privacy, its proof-of-concept implementation$^1$ prioritized semantic clarity over execution efficiency. 
Nevertheless, the initial evaluation~\cite{sridhar2026pacdb} shows that PAC-style noise yields significantly higher accuracy than differential privacy baselines for the same level of protection.

\begin{customsql}[caption={PAC-DB rewrite of TPC-H Q01.}, label={lst:pac-db}]
CREATE TEMPORARY TABLE samples AS (
    SELECT  s.sample_id, customer.rowid AS row_id,
        (RANDOM() > 0.5) AS random_binary
    FROM (SELECT range AS sample_id FROM range(128)) s
    JOIN customer ON TRUE);
    
FOR $sample=0 TO 128:
    SELECT l_returnflag, l_linestatus,
        sum(l_quantity) AS sum_qty,  
        avg(l_quantity) AS avg_qty, ...
        count(*) AS n_orders
    FROM lineitem JOIN orders ON l_orderkey = o_orderkey
                  JOIN customer ON o_custkey = c_custkey
                  JOIN samples ON row_id = customer.rowid
    WHERE l_shipdate <= CAST('1998-09-02' AS date)
    AND s.random_binary = TRUE AND s.sample_id = $sample
    GROUP BY l_returnflag, l_linestatus
# pick a sample s to return & noise its aggregate results  
# pac_noise() uses all 128 outcomes to get variance 
\end{customsql}
\newpage

\begin{customsql}[caption={SQL for TPC-H Q01, as rewritten by SIMD-PAC-DB.}, label={lst:optimized}]
SELECT l_returnflag, l_linestatus,
  pac_noised_sum(pac_hash(hash(o_custkey)), l_quantity),
  pac_noised_avg(pac_hash(hash(o_custkey)), l_quantity),
  ..., pac_noised_count(pac_hash(hash(o_custkey))) 
FROM lineitem JOIN orders ON l_orderkey = o_orderkey 
WHERE l_shipdate <= CAST('1998-09-02' AS date)
GROUP BY l_returnflag, l_linestatus
\end{customsql}

\vspace*{1mm}\noindent{\bf SIMD-PAC-DB. } 
This paper pursues a completely different way of implementing PAC-DB. It only executes a single query, making use of new {\bf stochastic aggregate functions} that we introduce, such as, \texttt{pac\_noised\_count(UBIGINT):UBIGINT}. It performs the work of computing an array \texttt{count[64]} of $m$=64 intermediate aggregates, where aggregate $j$ is updated iff the row belongs to a PU that was inside the 50\% sub-sample of the $j$-th ``possible world".
Its C++ pseudo-code implementation is shown in the \texttt{PacCountUpdate()} code on page 1.
Membership of world $j$ is encoded by bit $j$ of the hashed PK of the PU row \texttt{hash(c\_custkey)}. 
When \texttt{pac\_noised\_count()} finalizes, it uses the 64 aggregates to compute PAC-DB noise, and returns a single, noised, value. 

In SIMD-PAC-DB, we use $m=64$, as this aligns with DuckDB's \texttt{uint64\_t} hash numbers.\footnote{The reduction mostly affects utility rather than membership privacy guarantees. Extending to $m$=128 is possible, and worst-case would double the cost of our stochastic-aggregates --- but is likely less due to increased SIMD efficiencies of scale.}
A good hash function produces  integers with on average 50\% 0- and 50\% 1-hash bits, binomially distributed ($n$=64, $p$=0.5){\color{blue}}. We wrap DuckDB's hash function in our own \texttt{pac\_hash(UBIGINT):UBIGINT} function for two reasons: (i) this allows us to change the hash function on every query, using a rehash which virtually shuffles the complete universe of 64 possible worlds, allowing to consider a per-query privacy budget, rather than a session budget, and (ii) we can cheaply transform the binomially distributed hash number in a random hash that is {\em guaranteed} to have 32 0- and 32 1-bits; 
this stabilizes our stochastic aggregates and guarantees a uniform prior success rate of 50\% under MIA.

The \texttt{pac\_noised\_count()} function is the fused version of two functions: (i)
 \texttt{pac\_count(UBIGINT):list$<$UBIGINT$>$} that returns the 64 counters as a list\footnote{We could also have used fixed-size arrays: {\tt UBIGINT[64]}, but use lists as DuckDB provides lambda functions for these, which we use to evaluate arbitrary expressions.}, and (ii) 
\texttt{pac\_noised(list$<\!\!T\!\!>$):$T$} which measures the variance in the 64 inputs, takes the value from the possible world $j^*$ and returns \texttt{count[$j^*$]*2 + {\em noise}} (the \texttt{*2} is because each counter is expected to only see half of the tuples).
If multiple PAC aggregates are computed by a query, the query always use the same, randomly selected, possible world $j^*$ for better utility and query-level privacy.

\vspace*{2mm}\noindent{\bf PU-key Joins.} The original Q01 only scans the \texttt{lineitem} table, but in order to know to which possible world each row belongs, we need to fetch the PK  \texttt{c\_custkey} of the PU table \texttt{customer}. 
For this purpose, our rewriter adds \texttt{JOIN orders ON l\_orderkey = o\_orderkey}.
We note that \texttt{orders} contains the FK \texttt{o\_custkey}, which is all we need, hence a join to \texttt{customer} is not required.
While SIMD-PAC-DB executes just one query and is much faster than PAC-DB, Figure 1 shows it is slower than default DuckDB because (i) the cost of these PU-key joins, (ii) PAC aggregates are more costly than default ones. In Section~\ref{sec:simd}, we describe optimizations targeting SIMD execution that limit their overhead.

\section{DuckDB-PAC Extension}
\label{sec:implementation}
We implement SIMD-PAC-DB as the \texttt{pac} community extension, which can be installed in the DuckDB~\cite{raasveldt2019duckdb,raasveldt2020embedded} shell: \\ \texttt{install pac from community; load pac;}\footnote{This will be realized by March 15 after the DuckDB v1.5 release - DuckDB-PAC can always be manually compiled from \href{https://github.com/cwida/pac}{github.com/cwida/pac}}.

\subsection{PAC Parser}
SIMD-PAC-DB introduces an extended SQL syntax for declarative statements that define table relationships and designate columns that require privacy protection. Our DuckDB-PAC extension implements this with a parser extension that validates metadata and stores it separately from the database schema.

\vspace{2mm}\noindent{\bf PAC keys} serve as identifiers for PUs. Since PK constraints are computationally expensive to maintain and are often absent in analytical systems, PAC keys provide a lightweight alternative. 
Queries that scan a PU table or a dependent table must perform an aggregation where PAC noise is added --- queries that return unaggregated protected columns are rejected. As protected columns cannot be directly returned, they cannot be used as \texttt{GROUP BY} keys --- unless the result is re-aggregated on non-protected columns.
By default, all columns of a PU table are considered protected, but a specification can override this with the listed columns only.

\begin{simplecode}[language=SQL, caption={SQL syntax for PU definition, PAC keys and links. Also supports ALTER/DROP, see github.com/cwida/pac}, label={lst:pac-protected}]
CREATE PU TABLE table_name (
    column_name data_type [, ...],
    PAC_KEY (key_column [, key_column]*),
    PROTECTED (column_name [, column_name]*));
    
CREATE TABLE table_name (
    column_name data_type [, ...],
    PAC_LINK (local_column [, local_column]*)
        REFERENCES referenced_table 
        (referenced_column [, referenced_column]*));    
\end{simplecode}

\vspace*{0mm}\noindent{\bf PAC Links.}
PAC links define FK relationships between tables to propagate privacy. Unlike database-enforced FKs, PAC links are metadata-only and do not incur constraint checking overhead. Multi-column PAC links are supported. PAC links must not form cycles. All local and referenced columns named in a \texttt{PAC\_LINK} declaration are considered protected. 

\vspace*{2mm}\noindent{\bf Query Validation.}
When a query is issued, DuckDB-PAC traverses its plan to determine whether it should undergo a PAC transformation. There are 3 cases we identify:

\textit{Inconspicuous queries}: do not reference a PU (or a PAC linked table): these are executed in unmodified form.

\textit{Rejected queries}: if a query is directly returning a protected column without an aggregate on top. Also, when multiple protected tables are used in a query, they {\bf must} be equi-joined {\bf only} over exact \texttt{PAC\_LINK}s, otherwise they are rejected. Queries accessing PU data containing operators which are not yet supported --- such as window functions, recursion, or \texttt{NOT EXISTS} (negation) other than over \texttt{PAC\_LINK}s, also get rejected. 

\textit{Rewritable queries}: otherwise, if a query references a PU (either directly or indirectly), it will be rewritten to use PAC aggregates.

\subsection{PAC Rewriter}
The PAC rewriter --- corresponding to DuckDB-PAC's implementation of Algorithm~\ref{alg:pac-rewrite} ---  transforms a standard query plan into its SIMD-PAC-DB privatized equivalent, using the metadata defined in Listing~\ref{lst:pac-protected}. To enable global optimization after our rewrites, we perform all the plan modifications in DuckDB's pre-optimization phase. 
The rewrite proceeds in two phases: the \textit{top-down} and \textit{bottom-up}.

\vspace*{2mm}\noindent{\bf Top-down.} When the traversal encounters a table $T$ PAC linked to the PU table $U$, we insert the necessary chains as described in Section~\ref{sec:2}, computing hash column \textsf{pu} as \texttt{pac\_hash(H(pk))} (lines 3--5 of Algorithm~\ref{alg:pac-rewrite}). This is also pursued with materialized CTEs, which, in DuckDB's algebra, can be referenced in multiple places. If an Aggregate is enclosed in a \emph{correlated} subquery, we recurse into it and also add PU joins there; unless the correlating condition is a PAC link, in which case the outside \textsf{pu} is already visible.

\vspace*{2mm}\noindent{\bf Bottom-up.} After recursing on all children (lines 11--13 of Algorithm~\ref{alg:pac-rewrite}), the bottom-up phase visits each node and projects the \textsf{pu} column. Then, we rewrite standard aggregates $\mathsf{pac\_\!\!<\!\!aggr\!\!>\!\!(pu,..)}$, if their input rows are sensitive but their aggregation key is not (line 14). Note that e.\:g., the subquery of TPC-H Q13 groups by \texttt{c\_custkey}. We do not rewrite there; the query is made privacy-safe by the presence of a PAC aggregate at the top of the plan.

\vspace*{2mm}\noindent{\bf Categorical Rewrites.} Expressions involving (PAC) aggregates may appear in SELECT, GROUP BY, or ORDER BY lists, which in the logical plan will surface in the Project operator ($\pi$ here is non-duplicate eliminating, used for selecting columns and expression calculation), but may also appear in WHERE conditions. In the latter case, there must be a scalar subquery, which may or may not be correlated with the outer query.
Rather than immediately noising each aggregate with e.\:g. \texttt{pac\_noised\_count()}, our rewriter (lines 14-16), uses unfused PAC aggregates such as \texttt{pac\_count()}, which return 64 yet unnoised values.
We can iterate over these values, evaluating the expression in which the aggregate is embedded, using a lambda provided by DuckDB's \texttt{list\_transform()} (lines 17--19). If the expression involves multiple aggregates, they are first combined into a list of structs with the \texttt{list\_zip()} SQL function. 

For projections, the \texttt{pac\_noised()} function is put on top of the \texttt{list\_transform()} to produce a single noised result value (line 21). TPC-H Q08, 14 perform such aggregate expression calculations.

Noising boolean choices made in a WHERE clause differs from noising an aggregate: filters need to make a yes/no decision whether to return a row -- a binary rather than a continuous outcome.
We implement "probabilistic filtering" in the \texttt{pac\_filter(list$\!<\!$bool$\!>\!$ b):bool} function (used in line 26). It measures the \#true fraction in the 64 counters, and uses it as probability for making a random (i.\:e. noised) choice between true and false. Note that this reveals no information regarding the choice of the underlying secret world. We leave filter-after-aggregation out of the formal analysis in the next section, because here SIMD-PAC-DB  deviates from PAC-DB~\cite{sridhar2026pacdb}. There, the union of returned rows in the 128 query runs was sampled in a way that could select the same row multiple times -- we think avoiding duplicates helps usability.

However, when there is another PAC aggregate higher up in the plan (e.\:g. TPC-H Q17), an eager yes/no choice is not needed: rather, we AND \textsf{pu} bit-by-bit with the 64 booleans (line 24). The renamed \textsf{pu} can then be used by the upstream PAC aggregate functions. As an optimization, we filter out rows where this updated \textsf{pu}=0, since such rows do not play a role in any possible world.
For this AND between a hash and a \texttt{list$<\!$boolean$\!>$} we introduced the function \texttt{pac\_select()}. Since lambda expressions that iterate over lists introduce some overhead we also introduced for simple comparison expressions $<,\le,=,\neq,>,\ge$ fused versions like \texttt{pac\_select\_gt(hash, col, list$<\!\!T\!\!>$):bool} that perform the comparison ("\texttt{\_gt}": \texttt{col} $>$ \texttt{aggr}) followed by an AND with the PU hash in one go.
Note that the \texttt{pac\_select()} case falls inside the query fragment we formally analyze in the next section, where SIMD-PAC-DB upholds PAC-DB semantics.

\vspace*{2mm}\noindent{\bf NULLs.} During aggregation, each PAC aggregate maintains a 64-bit \textit{accumulator}: for each contributing tuple with hash $h$, we OR this accumulator ($K \gets K \mathbin{|} h$). At finalization, the accumulator encodes which of the subsamples received at least one contribution. Then, we employ a probabilistic mechanism returning NULL with probability $P(\mathsf{NULL}) = \frac{64 - \textbf{popcount(k)}}{64}$, independent of the secret realization. This naturally scales with data sparsity. When the aggregate does not return NULL, counters corresponding to unset bits are treated as zeros for noise calibration during PAC privatization.

\vspace*{0mm}
\begin{algorithm}[H]
    \small
    \caption{The \textsf{PacRewrite} algorithm. Top-down: insert PU joins and derive hash expressions, also visiting CTEs.
    Bottom-up: replace aggregates and rewrite categorical expressions, i.e., expressions involving PAC aggregates produced by (correlated) subqueries.}
    \label{alg:pac-rewrite}
    \begin{algorithmic}[1]
    \Function{PacRewrite}{node $V$, PU metadata $\mathcal{M}$}
      \Comment{\textit{--- Top-down phase}}
      \If{$V = \mathrm{Scan}(T)$ \textbf{and} $\mathrm{FKPath}(T,U)\in\mathcal{M}$ \label{line:linkpustart}
          \textbf{and} no join to $T_K$ yet}
        \State $T \xrightarrow{fk_1} T_1 \cdots
               T_k \xrightarrow{fk_k} U \gets \mathrm{FKPath}(T,U)$
               \Comment{add PU key joins}
        \State $V \gets V \bowtie_{fk_1}
               \!\pi_{\mathsf{pk}_1,fk_2}(\mathrm{Scan}(T_1))
               \!\bowtie_{fk_2}\!\cdots\!\pi_{fk_k}(\mathrm{Scan}(T_k))$
        \State $V \gets \pi_{\mathsf{cols}(V),\mathsf{pac\_hash}(fk_k)\rightarrow\,\mathsf{pu}}(V)$
               \Comment{compute PU hash}
      \EndIf
      \If{$V = \mathrm{CTE}[\mathrm{body}]$ \textbf{and} body refs.\ PU tables}
        \State $\mathrm{body} \gets
               \Call{PacRewrite}{\mathrm{body},\mathcal{M}}$
        \State $V \gets \mathrm{CTE}[
               \pi_{\mathrm{cols}(\mathrm{body}),\mathsf{pu}}(\mathrm{body})]$
               \Comment{CTE hash propagation}
      \EndIf
      \Comment{\textit{--- Recurse on children}}
      \For{each child $V_i$ of $V$}
        \State $V_i \gets \Call{PacRewrite}{V_i,\mathcal{M}}$
      \EndFor \label{line:linkpuend}
      \Comment{\textit{--- Bottom-up phase}}
      \If{$V\!=\!\gamma_{G,\mathsf{agg}_1(\mathsf{e}_1),..,\mathsf{agg}_d(\mathsf{e}_d)}(T)\!$ \textbf{and} $\exists T.\mathsf{pu}$ \textbf{and} $\neg\mathsf{sensitive}(G)\!$}\label{line:replaceaggregationstart}
        \State $V \gets \gamma_{G,\,\mathsf{pac\_agg}_1(\mathsf{pu},\mathsf{e}_1),..,\,\mathsf{pac\_agg}_d(\mathsf{pu},\mathsf{e}_d)}(T)$
      \EndIf \label{line:replaceaggregationend}
      \If{$V \in\{\sigma_\phi, \pi_{\phi_*} 
      \}$ \textbf{and}
          $\exists \phi_i(s_i), 1\!\leq\! i\!\leq\!m$, using $\mathsf{pac\_aggr}'s\ s_i$}
        \State $\mathbf{e}_i^1,\mathbf{e}_i^2,\ldots \gets s_i$
              \Comment PAC aggregate exprs in set $s^i$
        \State $\mathbf{r_i} \!\gets\!
               \Lambda_{(..\mathbf{e}_i^*..)}[\phi_i]$ 
               \Comment list\_transform(list\_zip($..\mathbf{e}_i^*..$),$x\!:\!\phi_i(x)$)
        \If{$V=\pi$} \label{line:addnoisestart}
                 \Comment add noise to aggr expr
                   \State $V \!\!\gets\!\!\pi_{*/\phi_1..\phi_m,\mathsf{pac\_noised}(\mathbf{r_1}) \rightarrow\,\phi_1,.,\mathsf{pac\_noised}(\mathbf{r_m}) \rightarrow\,\phi_m}\!\!(
                 V.\mathrm{child})$ \label{line:addnoiseend}
        \ElsIf{$V=\sigma$}
          \If{outer pac\_agg.\ exists}          \label{line:selectioncorrelatedsubquerystart}
            \Comment restrict $\mathsf{pu}$ for outer aggr
            \State $V \gets \sigma_{\mathsf{pu}\ne 0}(
                 \pi_{*/\mathsf{pu},\mathsf{pac\_select}(\mathsf{pu},\mathbf{r_1})\,\rightarrow\,\mathsf{pu}}\!(V.\mathrm{child}))$ \label{line:selectioncorrelatedsubqueryend}
           \Else
                 \Comment noised row filtering
          \State $V \gets \sigma_{\mathsf{pac\_filter}(\mathsf{pu},\mathbf{r_1})}(
                 V.\mathrm{child})$
          \EndIf
      \EndIf
   \EndIf
   \State \Return $V$
\EndFunction
    \end{algorithmic}
  \end{algorithm}
\newpage

  \vfill\newpage

\section{SIMD-PAC-DB Semantics}
In this section, we formalize the query semantics of SIMD-PAC-DB. 
We first present a relational-algebra formalization of a baseline execution model for PAC-DB~\cite{sridhar2026pacdb} that materializes the $m$ possible worlds, and state formal privacy guarantees.
We then introduce the semantics of SIMD-PAC-DB and highlight its advantages over the PAC-DB formulation. 
Crucially, we prove that these semantics provide the same privacy guarantees as PAC-DB, computing the exact same variances and noise without the overhead of realizing $m$ physical worlds. 

\vspace*{2mm}
\noindent
\textbf{Query Class.}
Let $\mathbb{I}$ be the set of base relations. 
Then, query class
$$
\mathcal{Q}
\;:=\;
\pi_{G;\; \mathsf{e}_1 \rightarrow \alpha_1,\; \dots,\; \mathsf{e}_c \rightarrow \alpha_c}
\!\left(
    \gamma_{G;\mathsf{Aggs}}(\mathcal{T}(\mathbb{I}))
\right).
$$
Here, $G$ is a (possibly empty) set of grouping attributes, $\mathsf{Aggs}$ is a collection of aggregation functions, and 
$\mathsf{e}_i \rightarrow \alpha_i$ denotes a projection expression with alias $\alpha_i$.
The subquery $\mathcal{T}$ is defined inductively. Let $T^1_s$ and $T^2_s$ denote base or intermediate relations that admit a foreign-key chain to the PU table. Then $\mathcal{T}$ can be constructed by arbitrary compositions of the following sub-expressions:

\begin{enumerate}[leftmargin=*,label=(\alph*)]
    \item $E = \pi_{\mathsf{cols}(T^2_s)}(\sigma_{\mathsf{p}}\!\left(T^2_s \;\mathsf{Join}_{G_1}\;\gamma_{G_1;\mathsf{Aggs}_1}(T^1_s))
\right)$, where $\mathsf{Join}_{G_1}$ denotes a join on the attribute set $G_1$, and $\mathsf{p}$ is a predicate over inner aggregation attributes and attributes of $T^2_s$. 
Specifically, it may represent (i) an inner join ($\Join_{\theta}$), 
(ii) a left outer join ($\leftouterjoin_{\theta}$), or 
(iii) a Cartesian product when $G_1 = \emptyset$.
    \item $F(\mathbb{T})$, where $F$ denotes an arbitrary relational algebra expression over a collection of base or intermediate relations $\mathbb{T}$, provided that it does not merge or combine tuples originating from different PU identifiers.
    \item The relations $T^1_s$ and $T^2_s$ are themselves produced by subqueries of type $F$, and their outputs may in turn serve as inputs to further expressions of type $F$. Therefore, the query class supports multiple inner aggregations. 
\end{enumerate}

This requires that every join in $\mathcal{Q}$ between two sensitive base or intermediate relations exactly satisfies the PAC Link constraints. 

\subsection{PAC-Private Databases} \label{sec:packdbsemantics}
\sloppy
We introduce two  auxiliary functions and expressions.

\vspace*{2mm}
\noindent
\textbf{SamplePU$(U)$:}
This function outputs $m$ sampled instances of the PU table $U$, denoted by $\{S_j\}_{j=1}^m$. 

\vspace*{2mm}
\noindent
\textbf{Noise function} $\mathsf{pac\_noised}(\mathsf{col}, j^*, B) \rightarrow \mathbb{R}$:
This stateful function is used within the projection operator $\pi$. It takes as input a column $\mathsf{col}$ that takes a $m$-dimensional vector in $\mathbb{R}^m$, the secret world index $j^*$, and a per-cell MI budget $B$. Let $y_j := \mathsf{col}[j]$ for $1 \leq j \leq m$, it:
\begin{enumerate}[leftmargin=*]
    \item computes the variance of the outputs under $P$, the current posterior distribution of the secret: $s^2 \leftarrow \mathsf{Var}_{j\sim P}[y_j]$. 
    \item sets the calibrated noise variance $\Delta \leftarrow s^2/(2B)$.
    \item outputs $\widetilde{y} \leftarrow y_{j^*} + \mathcal{N}(0, \Delta)$.
    \item performs a Bayesian update on the global vector $P$ given the observation $\widetilde{y}$:
    $P \leftarrow (P \odot W)/\mathsf{sum}(P\odot W)$,
    where $\odot$ denotes element-wise multiplication and $W \in \mathbb{R}^m$ is the likelihood vector with elements $W_j = \exp(-(\widetilde{y} - y_j)^2/(2\Delta))$.
\end{enumerate}

Given a query $Q \in \mathcal{Q}$, a PAC-DB engine produces the PAC-private query result  through the following compilation procedure:

\noindent \textbf{(1) Generate $m$ sampled database instances.} For each relation $I \in \mathbb{I}$ that has a foreign-key path to the PU table $U$, it joins $I$ with $U$ along the sequence of equi-joins induced by the corresponding $fk = pk$ relationships. It then projects the attributes $\mathsf{col}(I)$ together with the privacy-unit attribute.
Let $\{S_j\}_{j=1}^{m} := \mathsf{SamplePU}(U)$. 
For each relation $I$ and each sample $j$, we define
\[
I^{(j)} :=
\begin{cases}
I \Join_{S_j.pk} S_j, & \text{if $I$ has a foreign-key path to $U$,} \\[0.5em]
I, & \text{otherwise.}
\end{cases}
\]
The $j$-th sampled database instance is then defined as:
\[
\mathbb{I}^{(j)}
:=
\{\, I^{(j)} \mid I \in \mathbb{I} \,\}.
\]

\noindent
\textbf{(2) Execute $m$ queries and add noise.}
Evaluate $Q$ on each sampled instance and take the multiset union:
\[
T_1 
:= 
\uplus_{j=1}^{m}
Q(\mathbb{I}^{(j)}).
\]
Then, aggregate the $m$ values of the same group key as a $m$-dimensional vector. This is achieved via a $\mathsf{List}(\cdot)$ aggregation function, which collects the input values into an ordered list: 
\[
T_2
:=
\gamma_{G; \mathsf{List}(\alpha_1) \rightarrow \alpha'_1, \cdots, \mathsf{List}(\alpha_c) \rightarrow \alpha'_c}(T_1).
\]
Finally, sample a secret world index uniformly at random $j^* \stackrel{\$}{\leftarrow} \{1, \ldots, m\}$, initialize a global probability vector $P \in \mathbb{R}^m$, representing the posterior distribution over the subsets after observing noisy releases, to a uniform distribution $P = (1/m, \ldots, 1/m)$. Subsequently, apply the noise mechanism to each aggregated attribute:
\[
\mathsf{Output}_{\mathrm{PAC-DB}}(\mathbb{I}, Q)
:=
\pi_{G,\;
\mathsf{pac\_noised}(\alpha'_1, j^*, B), \ldots,
\mathsf{pac\_noised}(\alpha'_c, j^*, B)
}(T_2).
\]

\vspace*{1mm}
\noindent
\textbf{Privacy Guarantee.} By adaptively calibrating the injected noise to the variance across the $m$ sampled instances under the posterior distribution of the secret, this compilation procedure rigorously bounds the adversarial advantage of an arbitrary inference attack.

\begin{theorem}
\label{thm:post_adv}
For any inference attack and an informed adversary $A$, let $1-\delta_0$ denote the prior success rate and $1-\delta_A$ denote the posterior success rate after observing $\mathsf{Output}_{\mathrm{PAC-DB}}(\mathbb{I}, Q)$. The adversary's posterior advantage is strictly bounded by:
$$D_{\mathrm{KL}}(\bm{1}_{1-\delta_A} \,\|\, \bm{1}_{1-\delta_0}) \leq \mathrm{MI}(j^*;\mathsf{Output}_{\mathrm{PAC-DB}}(\mathbb{I}, Q)) \leq cB.$$
\end{theorem}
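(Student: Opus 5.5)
The proof has two inequalities, and I would treat them separately, relying on the results recalled in Section~\ref{sec:2}.

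\textbf{First inequality.} The secret in PAC-DB is the index $j^*$, drawn uniformly from $\{1,\ldots,m\}$. Conditioned on the data $\mathbb{I}$ and on the (public) sampling $\mathsf{SamplePU}(U)$, the realized world $S_{j^*}$ is a deterministic function of $j^*$. So any attack on $S$ is equally an attack on $j^*$, and the prior $P_S$ coincides with the uniform initialization $P=(1/m,\ldots,1/m)$. I would then invoke Eq.~\eqref{eq:delta<=mi}, which is the data-processing/Fano-style bound of \cite{hanshen2023crypto}, with $M(S)=\mathsf{Output}_{\mathrm{PAC-DB}}(\mathbb{I},Q)$. This gives $D_{\mathrm{KL}}(\bm{1}_{1-\delta_A}\|\bm{1}_{1-\delta_0})\le \mathrm{MI}(S;M(S))$. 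Since $S_{j^*}$ is a function of $j^*$ and $M$ depends on the secret only through $j^*$, we have $\mathrm{MI}(S;M)\le\mathrm{MI}(j^*;M)$ (in fact equality holds when the $S_j$ are distinct).

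\textbf{Second inequality.} I would view the output as a sequence of noisy releases. The output relation has some number of groups, each carrying $c$ attributes $\alpha'_1,\ldots,\alpha'_c$. Every call to $\mathsf{pac\_noised}$ releases $\widetilde y=f_i(j^*)+\mathcal N(0,\mathsf{Var}_{j\sim P_{i-1}}[y_j]/(2B))$, where $f_i(j)=y_j$ is the $j$-th entry of the collected list. It then Bayes-updates $P$ to the exact posterior $P_i$ of $j^*$ given all previous releases. This is exactly the adaptive calibration of \cite{zhu2026pacadversarial}.

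The key structural point to verify is that the vector $(y_1,\ldots,y_m)$ for each cell is computable from public information, namely $\mathbb{I}$ and the samples, independently of $j^*$. The query class $\mathcal{Q}$ is evaluated on every $\mathbb{I}^{(j)}$, so $f_i$ is a fixed function; it is adaptive only through the order of releases. The MI chain rule then gives
\[
\mathrm{MI}(j^*;\widetilde y_1,\ldots,\widetilde y_d)=\sum_i \mathrm{MI}(j^*;\widetilde y_i\mid \widetilde y_{<i}),
\]
and each term is at most $B$ by the per-release Gaussian bound of \cite{sridhar2025pac}, applied under the prior $P_{i-1}$. Summing yields $dB$, where $d$ is the number of released cells.

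\textbf{Main obstacle.} The statement claims $cB$, so I must reconcile $d$ with $c$. I would argue that the group keys $G$ are non-sensitive by the query validation rules, so the group set is public. I would then read the budget $B$ as allotted per output row, that is, per released tuple of $c$ values, so that the row-level composition gives $cB$. If the intended meaning is instead a total over all groups, I would state the bound as $c$ times the number of groups. A second delicate point is handling groups present in some worlds but not others: I would treat missing values as $0$ (or defer to the NULL mechanism) so that each $\alpha'_i$ is a well-defined $m$-vector and each release is genuinely a single Gaussian mechanism.
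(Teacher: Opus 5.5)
Your argument is correct and takes the route the paper relies on; the paper gives no separate proof. The first inequality is Eq.~\eqref{eq:delta<=mi}, and your data-processing step from $S=S_{j^*}$ to $j^*$ is fine. The second is the adaptive composition result of \cite{zhu2026pacadversarial} recalled in Section~\ref{sec:2}, which your chain rule plus per-release Gaussian bound under the exact posterior $P_{i-1}$ reproduces.

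Drop the ``budget per output row'' reading in your last paragraph. Every row is noised with the same $j^*$ and the same global $P$, so budgets compose across rows and that reading cannot give $cB$ for the whole output. The paper intends $c$ to count all noised cells: it calls $B$ a \emph{per-cell} budget and speaks of ``the $c$ cells.'' Under that reading your bound $dB$, with $d$ the number of released cells, is exactly the claimed $cB$. If $c$ is read literally as the number of projected columns, grouped queries need the extra factor for the number of groups, as you observe.
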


\vspace*{1mm}
Theorem~\ref{thm:post_adv} provides a systematic method to derive concrete privacy guarantees against any arbitrary inference attack. Given an inference attack, we first evaluate the prior success rate ($1-\delta_0$) of an informed adversary who knows the distribution from which the secret subset $S_{j^*}$ is sampled from, that is, knows $(S_1, \ldots, S_m)$ generated by $\mathsf{SamplePU}(U)$. Next, we plug this prior $1-\delta_0$ and the total MI bound $cB$ into Theorem~\ref{thm:post_adv} to solve for a strict upper bound on the posterior success rate ($1-\delta_A$).

As a concrete example, consider a \emph{query-level MIA} where an adversary attempts to infer if a PU is \emph{ever} used to evaluate the $c$ cells to produce $\mathsf{Output}_{\mathrm{PAC-DB}}(\mathbb{I}, Q)$.
When $\mathsf{SamplePU}$ is instantiated with a balanced subset construction (i.e., $\forall x\in U$, $|\{i: x \in S_i\}| = m/2$), any PU appears in exactly half of the possible subsets.
Since a persistent secret subsample $j^*$ is used to produce all the cell outputs, an informed adversary's prior success rate is exactly $50\%$. By substituting $1-\delta_0 = 0.5$ and the accumulated MI $cB$ into the theorem, we can numerically solve for the range of $\delta_A$, directly yielding the concrete limits on MIA success established in Section~\ref{sec:2}.

\subsection{SIMD-PAC-DB Compiler}
SIMD-PAC-DB replaces the costly sampling procedure of PAC-DB with a lightweight hash-based sampling mechanism that avoids joins with a sample table. Second, we observe that sampling need not be applied at the level of base relations. Instead, the $m$ logical samples can be introduced
lazily, only at operators that may induce shared ownership—e.g., aggregations. Third, and most importantly, rather than executing $m$ separate query evaluations, SIMD-PAC-DB evaluates a single query by incorporating novel aggregation functions that implicitly capture the effect of the $m$ samples. 

\vspace*{2mm}\noindent\textbf{Vector-lifted Expression, and Lambda Functions.} Let $\mathsf{e}(\mathsf{cols})$ denote an arbitrary scalar expression defined over a set of columns $\mathsf{cols} = \{\mathsf{col}_1, \ldots, \mathsf{col}_a\}$, where each column $\mathsf{col}_i$ takes a scalar value, and the expression outputs a scalar in $\mathbb{R}$. We define the vector-lifted version of $\mathsf{e}$, denoted by 
$\vec{\mathsf{e}}(\mathsf{cols})$, as the pointwise extension of $\mathsf{e}$ to mixed scalar/vector inputs. Specifically, assume that all vector-valued columns are $m$-dimensional, $\vec{\mathsf{e}}(\mathsf{cols})$ produces an $m$-dimensional vector:
$$\vec{\mathsf{e}}(\mathsf{cols})
=
\big(
\mathsf{e}(\{\mathsf{col}_i[1]\}_{i=1}^a),
\ldots,
\mathsf{e}(\{\mathsf{col}_i[m]\}_{i=1}^a)
\big).$$ 
For scalar-valued columns, we adopt implicit broadcasting:
if $\mathsf{col}_i$ is scalar, then $\mathsf{col}_i[j] = \mathsf{col}_i$ for all $1 \leq j\leq m$.

We emphasize that vector-lifted expressions are evaluated efficiently by leveraging higher-order SQL list operators—namely $\mathsf{list\_transform}$ that applies a lambda, and $\mathsf{list\_zip}$. Formally, the lifted expression $\vec{\mathsf{e}}$ can be expressed as:
\begin{equation} \label{vectorexpressionlambda}
\begin{split}
\vec{\mathsf{e}}(\mathsf{cols})
=
\mathsf{list\_transform}
\Big(
    \mathsf{list\_zip}(\mathsf{col}_1, \ldots, \mathsf{col}_a), \\
  \lambda x.\,\mathsf{e}\big(x[1], \cdots, x[a]\big)
\Big).
\end{split}
\end{equation}
Here, $\mathsf{list\_zip}$ aligns the input columns element-wise into tuples, and $\mathsf{list\_transform}$ applies the scalar expression $\mathsf{e}$ to each tuple via the lambda function, thereby producing the vectorized result.

\vspace*{2mm}
\noindent \textbf{PAC Hash $\mathsf{pac\_hash}_K(\mathsf{col}) \rightarrow \{0,1\}^m$:}
We define $\mathsf{pac\_hash}$ as a keyed hash function that assigns an $m$-bit vector to each PU, where $K$ is a randomly sampled key drawn independently for each query. Crucially, $\mathsf{pac\_hash}_K$ is designed to be the vector-functional equivalent of the subset construction strategy used by $\mathsf{SamplePU}$.

For a given PU table $U$, defining the $i$-th implicit logical subset as $S_i = \sigma_{\mathsf{pac\_hash}_K(pk)[i]=1}(U)$ ensures that the joint distribution of the resulting subsets $(S_1, \ldots, S_m)$ over the random choice of $K$ is strictly identical to the distribution of subsets produced by $\mathsf{SamplePU}(U)$. $\mathsf{pac\_hash}()$ further applies a query-specific re-hash, such that for every query, the worlds are fully re-created -- something that would be  expensive in PAC-DB's sample table approach.


\vspace*{2mm}
\noindent \textbf{PAC Aggregation $\mathsf{pac\_agg}(\mathsf{pu}, \mathsf{e}(\mathsf{cols})) \rightarrow \mathbb{R}^m$.} It takes as input the hash column $\mathsf{pu}$ and a scalar expression $\mathsf{e}$. Given an input relation $T$, it produces an $m$-dimensional vector $\{r_1, \cdots, r_m\}$ defined by for all $1 \leq j\leq m$, where $\mathsf{agg} \in
\{\mathsf{sum}, \mathsf{count}, \mathsf{avg}, \mathsf{min}, \mathsf{max}\}.$
\[
r_j
=
\mathsf{agg}
\Big(
\{
\mathsf{e}(t[\mathsf{cols}])
\mid
t \in T,\;
\mathsf{pu}[j] = 1
\}
\Big),
\]

\vspace*{2mm}\noindent \textbf{PAC Selection $\mathsf{pac\_select}(\mathsf{pu},\vec{\mathsf{p}}(\mathsf{cols})) \rightarrow \{0,1\}^m$}
takes as input a hash column $\mathsf{pu} \in \{0, 1\}^m$ and a Boolean-vector expression
$\vec{\mathsf{p}}(\mathsf{cols}) \in \{0,1\}^m$, 
and returns their bitwise AND.  
\newpage

\vspace*{0mm}
\noindent{\textbf{Compiler.}}
Given a query $Q \in \mathcal{Q}$, 
the SIMD-PAC-DB engine produces a PAC-private result via the 
following compilation procedure.

\noindent\textbf{(1) PU propagation ($\mathsf{Re}^{\mathsf{SIMD}}_1(Q)$).}
As shown in Lines~\ref{line:linkpustart}--\ref{line:linkpuend} of 
Algorithm~\ref{alg:pac-rewrite}, we traverse the 
expression of $Q$ in a top-down manner. For each operator, if it's input relation $T$ has foreign keys participating in a path contained in $\mathrm{FKPath}$, thereby identifying $T$ as sensitive, we augment $T$ with a PU hash column $\mathsf{pu}$.

\noindent\textbf{(2) Aggregation replacement $(\mathsf{Re}^{\mathsf{SIMD}}_{1,2}(Q))$.}
As indicated in  Lines~\ref{line:replaceaggregationstart}--\ref{line:replaceaggregationend}, 
each aggregation operator 
$\gamma_{G;\mathsf{Aggs}}(T)$ over a sensitive relation $T$ 
is replaced with its PAC counterpart. 
Formally, let 
$\mathsf{Aggs}=\{\mathsf{agg}_1,\ldots,\mathsf{agg}_d\}$. 
If $T$ contains the PU hash column $\mathsf{pu}$, we rewrite
$\gamma_{G;\mathsf{agg}_1,\ldots,\mathsf{agg}_d}(T)$
as
$\gamma_{G;\mathsf{pac\_agg}_1,\ldots,\mathsf{pac\_agg}_d}(T)$.

\noindent\textbf{(3) Selection rewriting in sub-expression $E$ ($\mathsf{Re}^{\mathsf{SIMD}}_{1,2,3}(Q)$).}
As shown in Lines~\ref{line:selectioncorrelatedsubquerystart}--\ref{line:selectioncorrelatedsubqueryend}, 
a selection of the form $\sigma_{\mathsf{p}}(T)$—where 
$\mathsf{p}$ depends on inner aggregation results is rewritten as follows. 
We first compute the vector-lifted Boolean expression 
$\vec{\mathsf{p}}$ and then apply
\[
\pi_{\mathsf{cols}(T)\setminus \{\mathsf{pu}\}, \mathsf{pac\_select}(\mathsf{pu}, \vec{\mathsf{p}}) \rightarrow \mathsf{pu}}(T)
\]
to update the hash column $\mathsf{pu}$. 
Specifically, for each position $1 \leq j\leq m$, the output bit of $\mathsf{pac\_select}$ is set to $1$ 
iff the corresponding tuple satisfies the predicate when evaluated on the aggregation result of the $j$-th sub-sample and the original $\mathsf{pu}[j] = 1$; otherwise, the output bit is set to $0$.

\noindent\textbf{(4) Final projection and noise addition $\mathsf{Re}^{\mathsf{SIMD}}_{1,2,3,4}(Q)$.}
As indicated in 
In Lines~\ref{line:addnoisestart}--\ref{line:addnoiseend},  the top-level projection operator $\pi$ is rewritten so as to evaluate  vector-lifted expressions over the aggregation results.  Subsequently, calibrated noise is injected into each resulting expression by applying $\mathsf{pac\_noised}(\cdot, j^*, P)$, where the secret index $j^*$  and the probability vector $P$ are sampled and initialized identically to the PAC-DB procedure described in Section~\ref{sec:packdbsemantics}. 
This yields a PAC-private output. We therefore define
\[ 
\mathsf{Output}_{\mathrm{SIMD\text{-}PAC\text{-}DB}}(\mathbb{I}, Q)
\;:=\;
\mathsf{Re}^{\mathsf{SIMD}}_{1,2,3,4}(Q).
\]

Below, we establish the computational equivalence between PAC-DB and SIMD-PAC-DB for every admissible set of base relations $\mathbb{I}$ and every supported query $q \in \mathcal{Q}$. To enable a formal comparison, we couple their randomness. Specifically, we assume that both engines rely on the same underlying randomness for subset generation. Concretely, the PAC-DB compiler instantiates $\mathsf{SamplePU}$ so that its sampled subsets $\{S_j\}_{j=1}^m$ coincide exactly with the implicit subsets induced by the SIMD-PAC-DB compiler; that is, $S_j =
\sigma_{\mathsf{pac\_hash}_K(pk)[j] = 1}(U),$ for each $j \in \{1,\dots,m\}$. 
Moreover, we assume that both approaches share the same sampled secret index $j^*$ and the same initial probability state $P$. Finally, we assume that PAC-DB, like SIMD-PAC-DB, augments each sensitive base relation with a hashed PU column. 

\begin{theorem}[\textbf{Equivalence of PAC-DB and SIMD-PAC-DB}] \label{thm:equivalence}
Let $\mathbb{I}$ be a set of base relations, and let $\mathbb{I}^* \subseteq \mathbb{I}$ denote the subset of sensitive relations, where each $I \in \mathbb{I}^*$ is augmented with a PU hash column. 
Let $Q \in \mathcal{Q}(\mathbb{I})$ be any supported query. For any fixed choice of $\mathsf{pac\_hash}$ and any fixed realization of both the secret randomness and the noise randomness, the PAC-DB and SIMD-PAC-DB compilers produce identical outputs. Formally,
\[
\mathsf{Output}_{\mathrm{SIMD\text{-}PAC\text{-}DB}}(\mathbb{I}, Q)
\;=\;
\mathsf{Output}_{\mathrm{PAC\text{-}DB}}(\mathbb{I}, Q).
\]
\end{theorem}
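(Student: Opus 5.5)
The argument is a structural induction over the construction of $\mathcal{T}$ in the query class $\mathcal{Q}$. The invariant relates the single SIMD-PAC-DB intermediate relation to the $m$ PAC-DB intermediate relations world by world. Because both sides use the coupled subsets $S_j=\sigma_{\mathsf{pac\_hash}_K(pk)[j]=1}(U)$, the same $j^*$, the same initial $P$ and the same noise draws, it suffices to show that before the final noising both engines hold, for each group key $g$, the same vector in $\mathbb{R}^m$. That is, PAC-DB's list $\alpha'_i[g]=(Q(\mathbb{I}^{(1)})[g].\alpha_i,\ldots,Q(\mathbb{I}^{(m)})[g].\alpha_i)$ must equal the vector-lifted expression $\vec{\mathsf{e}}_i$ that SIMD-PAC-DB computes. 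Since $\mathsf{pac\_noised}$ is a deterministic function of (vector, $j^*$, $P$, noise seed), and the Bayesian updates of $P$ are applied in the same cell order on identical vectors, the outputs then coincide cell by cell.

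\emph{Invariant for non-aggregated sensitive relations.} Let $R$ be a relation built from type-(b), (c) and (a) sub-expressions. Let $R^{\mathrm{SIMD}}$ be its rewritten version carrying a column $\mathsf{pu}\in\{0,1\}^m$, and $R^{(j)}$ its PAC-DB evaluation on $\mathbb{I}^{(j)}$. I will show that for every $j$,
\[ R^{(j)} \;=\; \pi_{\mathsf{cols}(R)}\bigl(\sigma_{\mathsf{pu}[j]=1}(R^{\mathrm{SIMD}})\bigr) \]
as multisets, with the PU hash column identical on both sides.

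The base case is a sensitive $I$. Here $I^{(j)}=I\Join S_j$ keeps exactly the tuples whose PU hash has bit $j$ set, by the coupling; non-sensitive relations are unchanged and need no $\mathsf{pu}$.

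For type (b), $F$ never combines tuples from different PU identifiers. Each output tuple is therefore determined by tuples sharing one PU, so $F$ commutes with the filter $\sigma_{\mathsf{pu}[j]=1}$ once $\mathsf{pu}$ is propagated through $F$. I will check this operator by operator: selections, projections, PAC-link equi-joins (where both sides share the PU and hence the same hash), unions, and grouping by PU keys. Joins with non-sensitive relations commute trivially.

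\emph{Inner aggregation and selection (type (a)).} This is the main obstacle. In PAC-DB the inner $\gamma_{G_1;\mathsf{Aggs}_1}(T^1_s)$ is evaluated on $T^{1,(j)}_s$. By the invariant and the definition of $\mathsf{pac\_agg}$, the $j$-th coordinate of $\mathsf{pac\_agg}(\mathsf{pu},\mathsf{e})$ on $T^{1,\mathrm{SIMD}}_s$ equals $\mathsf{agg}$ over the tuples with $\mathsf{pu}[j]=1$, which is exactly the world-$j$ aggregate. A subtlety is groups that are empty in world $j$ but nonempty overall. For inner joins the PAC-DB group is absent; in SIMD-PAC-DB the group row exists, but every $T^2_s$ tuple joined to it has $\mathsf{pu}[j]=0$ by the PAC-link constraint, so it is filtered out anyway. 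For left outer joins and empty-aggregate semantics I will argue by matching NULL/zero conventions coordinatewise.

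The join with $T^2_s$ followed by $\sigma_{\mathsf{p}}$ is rewritten to $\mathsf{pac\_select}(\mathsf{pu},\vec{\mathsf{p}})$. Its $j$-th bit is $\mathsf{pu}[j]\wedge \mathsf{p}$ evaluated on the world-$j$ aggregate, by the definition of vector lifting with broadcasting. Filtering on bit $j$ therefore reproduces $\sigma_{\mathsf{p}}$ on $T^{2,(j)}_s\Join\gamma(T^{1,(j)}_s)$, and the final $\pi_{\mathsf{cols}(T^2_s)}$ restores the invariant. Type (c) nesting follows by iterating.

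\emph{Top level.} Applying the same aggregate argument to the outer $\gamma_{G;\mathsf{Aggs}}$ shows that each group's $\mathsf{pac\_agg}$ vector has $j$-th coordinate equal to the PAC-DB aggregate in world $j$. The lifted projection $\vec{\mathsf{e}}_i$ then computes $\mathsf{e}_i$ coordinatewise, which is exactly what $\mathsf{List}$ over $\uplus_j Q(\mathbb{I}^{(j)})$ collects, in the order $j=1..m$. The remaining group-existence mismatch (groups absent in some worlds) has to be resolved by the convention that missing entries become zero/NULL as in the accumulator mechanism. I expect pinning this convention down rigorously to be the delicate part of the final step.
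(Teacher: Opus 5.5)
Your proposal follows the paper's own proof: your world-$j$ invariant is its Eq.~\eqref{proofequation1}/\eqref{equationft}, and your $\mathsf{pac\_agg}$-coordinate, $\mathsf{pac\_select}$, and vector-lifting-with-shared-randomness steps are its Steps 2--4. The empty-group and NULL-alignment issue you flag is not resolved in the paper either, since Eq.~\eqref{eq:agg_equiv} simply treats $\mathsf{List}$ as a world-indexed length-$m$ vector; one small correction is that after a $\mathsf{pac\_select}$ the $\mathsf{pu}$ column is no longer a function of the PU alone, so your invariant should project it away (as the paper does), and a later PAC-link join must carry the updated, bitwise-ANDed $\mathsf{pu}$ (the paper's ``$A$ inherits the updated hash column'') rather than rely on both sides having the same hash.
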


\textsc{Proof.}
We establish the equivalence in four steps.

\noindent
\textit{\textbf{1: Base case $F(\mathbb{T})$ with $\mathbb{T} = \mathbb{I}$.}}
Under the hash-based instantiation, the $j$-th sampled database in PAC-DB is defined as
$
\mathbb{I}^{(j)}
=
\{\sigma_{\mathsf{pu}[j]=1}(I) \mid I \in \mathbb{I}^*\}
\;\cup\;
(\mathbb{I} \setminus \mathbb{I}^*).
$
For each hash value $h$ of a PU identifier, define
$
\mathbb{I}_{h}
=
\{\sigma_{\mathsf{pu}=h}(I) \mid I \in \mathbb{I}^*\}
\;\cup\;
(\mathbb{I} \setminus \mathbb{I}^*).
$
Since $F$ is deterministic and does not combine tuples across distinct PU identifiers, we have
$
\sigma_{\mathsf{pu}=h}
\big(
\mathsf{Re}^{\mathsf{SIMD}}_1(F(\mathbb{I}))
\big)
=
F(\mathbb{I}_{h}).
$
Consequently, after propagating the PU transformation,
\begin{equation}\label{proofequation1}
\pi_{\{\mathsf{cols}\}\setminus\{\mathsf{pu}\}}
\Big(
\sigma_{\mathsf{pu}[j]=1}
\big(
\mathsf{Re}^{\mathsf{SIMD}}_1(F(\mathbb{I}))
\big)
\Big)
=
\pi_{\{\mathsf{cols}\}\setminus\{\mathsf{pu}\}}
\big(
F(\mathbb{I}^{(j)})
\big).
\end{equation}

\medskip
\noindent
\textit{\textbf{2: First aggregation.}}
By the definition of PAC aggregation, applying $\mathsf{pac\_agg}$ to 
$\mathsf{Re}^{\mathsf{SIMD}}_1(F(\mathbb{I}))$
produces a vector $(r_1,\ldots,r_m)$, where each component $r_j$ equals the result of applying the scalar aggregation $\mathsf{agg}$ to the left-hand side of Eq.~\eqref{proofequation1}. Equivalently,
$r_j = \mathsf{agg}\big(F(\mathbb{I}^{(j)})\big)$.
Therefore,
\begin{equation}\label{eq:agg_equiv}
\mathsf{Re}^{\mathsf{SIMD}}_{1,2}
\Big(
\gamma_{G;\mathsf{agg}}\big(F(\mathbb{I})\big)
\Big)
=
\gamma_{G;\mathsf{List}(\alpha)}
\left(
\uplus_{j=1}^{m}
\gamma_{G;\mathsf{agg}\rightarrow\alpha}
\big(
F(\mathbb{I}^{(j)})
\big)
\right).
\end{equation}

\medskip
\noindent
\textit{\textbf{3: Selection and nested aggregation.}}
Consider sub-expression~(a):
\[
E
=
\pi_{\mathsf{cols}(T_s^2)}
\!\left(
\sigma_{\mathsf{p}}
\big(
T_s^2
\;\mathsf{Join}_{G_1}\;
\gamma_{G_1;\mathsf{Aggs}_1}(T_s^1)
\big)
\right).
\]
Assume that both $T_s^1$ and $T_s^2$ are produced by subqueries in $F(\mathbb{I})$, i.e. $T_s^1 = F_1(\mathbb{I})$ and  $T_s^2 = F_2(\mathbb{I})$.
By Eq.~\eqref{eq:agg_equiv}, inner aggregation is preserved under sampling.
The join operator must satisfy the PAC Link constraint, namely that $G_1$ forms a foreign key from $T_s^2$ to $\gamma_{G_1;\mathsf{Aggs}_1}(T_s^1)$.
Under this condition, the join is key-preserving on $T_s^2$: it introduces no duplication and merely appends aggregation attributes.
Thus, in PAC-DB,
\[
E^{(j)}
=
\pi_{\mathsf{cols}(F_2(\mathbb{I}^{(j)}))}
\!\left(
\sigma_{\mathsf{p}}
\big(
F_2(\mathbb{I}^{(j)})
\;\mathsf{Join}_{G_1}\;
\gamma_{G_1;\mathsf{Aggs}_1}(F_1(\mathbb{I}^{(j)}))
\big)
\right)
\]
returns exactly those tuples in $T_j = F_2(\mathbb{I}^{(j)})
\;\mathsf{Join}_{G_1}\;
\pi_{G_1} (\gamma_{G_1;\mathsf{Aggs}_1}(F_1(\mathbb{I}^{(j)})))$ whose aggregation results satisfy $\mathsf{p}$. By Eq.~\eqref{proofequation1},
this relation is equivalent to selecting precisely those tuples in $ T=
F_2(\mathbb{I})
\;\mathsf{Join}_{G_1}\;
\pi_{G_1}\!\left(
\gamma_{G_1;\mathsf{Aggs}_1}\big(F_1(\mathbb{I})\big)
\right)$
whose aggregation results satisfy $\mathsf{p}$ and whose corresponding
PU identifiers are selected in sub-sample $j$.

In SIMD-PAC-DB, this consistency is enforced by
\[
\mathsf{Re}^{\mathsf{SIMD}}_{1,2, 3}(E) = \sigma_{\mathsf{pu} \neq 0}
\Big( 
\pi_{\mathsf{cols}\setminus\{\mathsf{pu}\},\mathsf{pac\_select}(\mathsf{pu},\vec{\mathsf{p}})
\rightarrow \mathsf{pu}}
\big(
\mathsf{Re}^{\mathsf{SIMD}}_{1,2}(E)
\big)
\Big).
\]
Here, the vector-lifted predicate $\vec{\mathsf{p}}$
is computed such that $\vec{\mathsf{p}}[j]$
indicates whether the tuple satisfies $\mathsf{p}$ with respect to
the $j$-th aggregation result $r_j$, where
$(r_1,\ldots,r_m)$ denotes the vector of aggregation results
produced by the inner PAC aggregation.
The operator $\mathsf{pac\_select}$ then performs a component-wise logical
AND between $\vec{\mathsf{p}}$ and the original
PU-hash bit vector, yielding an updated $\mathsf{pu}$.
Consequently, $\mathsf{pu}[j]$ is set to $1$ if and only if the tuple occurs in $T_j$ (i.e. sub-sample $j$) and satisfies $\mathsf{p}$ when evaluated using the aggregation result computed for sub-sample $j$. 
The outer selection $\sigma_{\mathsf{pu} \neq 0}$ removes all tuples that do not satisfy the predicate in any of the $m$ sub-samples.
Hence,
\begin{equation}\label{equationsubexpression1}
\pi_{\mathsf{cols}\setminus\{\mathsf{pu}\}}
\Big(
\sigma_{\mathsf{pu}[j]=1}
\big(
\mathsf{Re}^{\mathsf{SIMD}}_{1,2,3}(E)
\big)
\Big)
=
\pi_{\mathsf{cols}\setminus\{\mathsf{pu}\}}
\big(
E^{(j)}
\big).
\end{equation}

The relation $E$ may subsequently serve as an input to an outer query of type $F$. A noticeable operation within $F$ is the join between $E$ and another PU-linked relation, denoted $A$.
By the PAC Link constraint, tuples may be combined only if they lead to the same PU row. Moreover, the join must be executed along the (acyclic) foreign-key chain leading to the PU table $U$. Along this linear path, the relation $A$ inherits the updated hash column \emph{pu} propagated from its inner subtree.
Consequently, no additional transformation of the join operator is required. 
Therefore, Eq.~\eqref{proofequation1} naturally extends to queries of the form $F(\mathbb{I} \cup \{E\})$.

By iterating this argument, Eq.~\eqref{equationsubexpression1}
continues to hold for expressions $E'$ where
$T_s^1$ and $T_s^2$ are generated by $F(\mathbb{I} \cup \{E\})$.
More generally, let $\mathcal{T}$ be any query obtained by
arbitrarily nesting $F$ and sub-expression~(a).
Then
\begin{equation}\label{equationft}
\pi_{\mathsf{cols}\setminus\{\mathsf{pu}\}}
\Big(
\sigma_{\mathsf{pu}[j]=1}
\big(
\mathsf{Re}^{\mathsf{SIMD}}_{1,2,3}(\mathcal{T}(\mathbb{I}))
\big)
\Big)
=
\pi_{\mathsf{cols}\setminus\{\mathsf{pu}\}}
\big(
\mathcal{T}(\mathbb{I}^{(j)})
\big).
\end{equation}

Propagating Eq.~\eqref{equationft} to the top-level aggregation yields
\[
\mathsf{Re}^{\mathsf{SIMD}}_{1,2,3}
\Big(
\gamma_{G;\mathsf{agg}}\big(\mathcal{T}(\mathbb{I})\big)
\Big)
=
\gamma_{G;\mathsf{List}(\alpha)}
\left(
\uplus_{j=1}^{m}
\gamma_{G;\mathsf{agg}}
\big(
\mathcal{T}(\mathbb{I}^{(j)})
\big)
\right).
\]
Here, $\mathsf{pac\_agg}(\mathsf{pu}, \cdot)$ in SIMD-PAC-DB produces a vector whose $j$-th component is obtained by aggregating exactly those tuples with $\mathsf{pu}[j] = 1$. This condition indicates that the tuples belong to the $j$-th hash-induced sub-sample and satisfy all propagated predicates within that sub-sample.

We note that vector-valued predicates $\mathsf{p}$ are evaluated component-wise using Eq.~\eqref{vectorexpressionlambda}.
The function $\mathsf{list\_zip}$ combines vectors
$\mathsf{col}_1,\ldots,\mathsf{col}_a$
into a vector $\mathsf{v}$ with
$\mathsf{v}[j]=(\mathsf{col}_1[j],\ldots,\mathsf{col}_a[j])$.
Applying $\mathsf{list\_transform}$ with
$\lambda x.\,\mathsf{e}(x[1],\ldots,x[a])$
ensures coordinate-wise evaluation, preserving per-sample semantics.

\medskip
\noindent
\textit{\textbf{4: Projection and noise addition.}}
Post-aggregation projections are vector-lifted via Eq.~\eqref{vectorexpressionlambda}.
Both systems apply the same noise mechanism with fixed randomness. 
Let $T' 
=
\gamma_{G;\mathsf{List}(\alpha)\rightarrow\alpha'}
\left(
\uplus_{j=1}^{m}
Q(\mathbb{I}^{(j)})
\right), $ where $Q(\mathbb{I}^{(j)}) = \pi_{\mathsf{e}}
\big(
\gamma_{G;\mathsf{agg}\rightarrow\alpha}(\mathcal{T}(\mathbb{I}^{(j)}))
\big).$
Then
\begin{equation}\label{pronoiseequation}
\mathsf{Re}^{\mathsf{SIMD}}_{1,2,3,4}
\!\left(
\pi_{\mathsf{e}}\big(\gamma_{G;\mathsf{agg}}(\mathcal{T}(\mathbb{I}))\big)
\right)
=
\pi_{\mathsf{pac\_noised}(\alpha',j^*,B)}
\big(T'
\big).
\end{equation}

The right-hand side of Eq.~\eqref{pronoiseequation} is exactly the output of PAC-DB, while the left-hand side equals the output of SIMD-PAC-DB.
The general case with multiple aggregations and projections follows component-wise.
Therefore,
\[
\mathsf{Output}_{\mathrm{SIMD\text{-}PAC\text{-}DB}}(\mathbb{I},Q)
=
\mathsf{Output}_{\mathrm{PAC\text{-}DB}}(\mathbb{I},Q).\quad\qed
\]

\medskip
As an immediate corollary of Theorem~\ref{thm:equivalence}, the privacy guarantee established in Theorem~\ref{thm:post_adv} also holds for SIMD-PAC-DB.

\section{SIMD-optimized PAC Aggregates}
\label{sec:simd}
Our stochastic aggregate functions \texttt{pac\_\{count,min,max,sum,avg\}} compute an array of 64 aggregates, in which the value at position 0$\leq\!\!j\!\!<$64 accumulates contributions from those rows whose \texttt{key\_hash} has bit~$j$ set. We now describe a number of optimizations for them. For portability and future-proofness, we avoid writng explicit SIMD, but rather rely on C++ compiler auto-vectorization.

\vspace*{2mm}\noindent{\bf Na\"ive SIMD-unfriendly}.
 We started with an implementation that updates in the loop with \texttt{if ((key\_hash$>\!\!>$j)\&1) update(aggr[j])}. However, using an \texttt{if} inhibits the use of SIMD and also suffers from CPU branch-mispredictions, especially because the probability to take the branch is 50\% in PAC, and thus is unpredictable, the worst case.
Therefore, the \texttt{PacCountUpdate()} code shown on page 1 already used {\bf predication}, that transforms a control- into a data-dependency: incrementing \texttt{count[j]} by the boolean condition (0 or 1). For \texttt{SUM/AVG} we achieve this by multiplying the value to add with (0 or 1).
\texttt{MIN/MAX} subtract 1 from the boolean so it becomes (-1, 0), which is used as a mask to select either the aggregate or the new value.
We use \texttt{union} types to interpret values as signed/unsigned integers or floating-point, depending on the operation (mask vs comparison); this works well with auto-vectorization. 

\vspace*{2mm}\noindent{\bf Cascading.}
SIMD instructions execute a single operation on multiple data \emph{lanes}. Lane-widths are 8, 16, 32, or 64-bits. 
The thinner the lane, the more data items can be processed in one instruction.
The DuckDB \texttt{COUNT()} returns a \texttt{UBIGINT}, which requires slowest lane width (64-bits). In order to profit more from SIMD we use \emph{Cascading} by keeping two arrays: 
\texttt{uint8\_t probabilistic\_total8[64]}, and \texttt{uint64\_t probabilistic\_total[64]}. Every 255 updates, we add total8 to total, and zero the former.
This prevents overflow, but still allows to perform the additions in a lane-width of 8-bits.

Autovectorizing compilers strongly prefer computations on a single lane-width.
This is a complication, because \texttt{key\_hash} is 64-bits.
The cascading count hence cannot use 8-bits lanes, but must use 64-bits lanes.
Therefore, we use {\bf SWAR (SIMD Within A Register)} to perform the predicated add: we change into
\texttt{{\bf\color{red} uint64\_t} probabilistic\_total8[{\bf\color{red} 8}]} that consists of  8x "virtual" 8-bits lanes.
Figure~\ref{fig:godbolt} just shows the SWAR update kernel, and leaves out the propagation of the 8-bits total8 values towards the full totals.\footnote{Our code supports a {\tt PAC\_GODBOLT} define to extract the aggregate kernels e.g. {\tt cpp -DPAC\_GODBOLT -P -E -w  src/include/aggregates/pac\_count.hpp} which can be copy-pasted in godbolt.org. Works also for pac\_sum.hpp and pac\_min\_max.hpp }

\begin{figure}
  \centering
  \includegraphics[width=\linewidth]{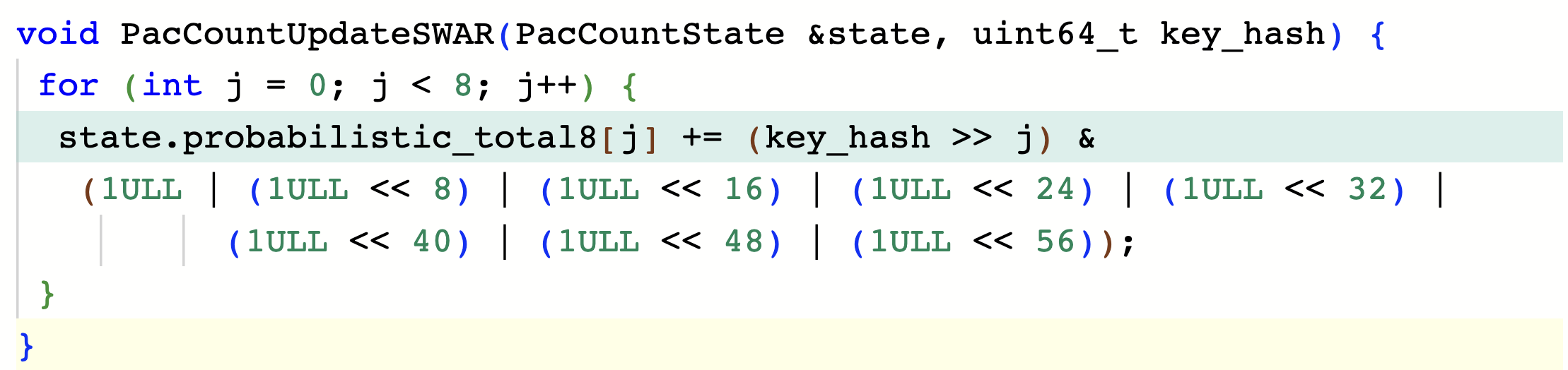}
  \includegraphics[width=\linewidth]
  {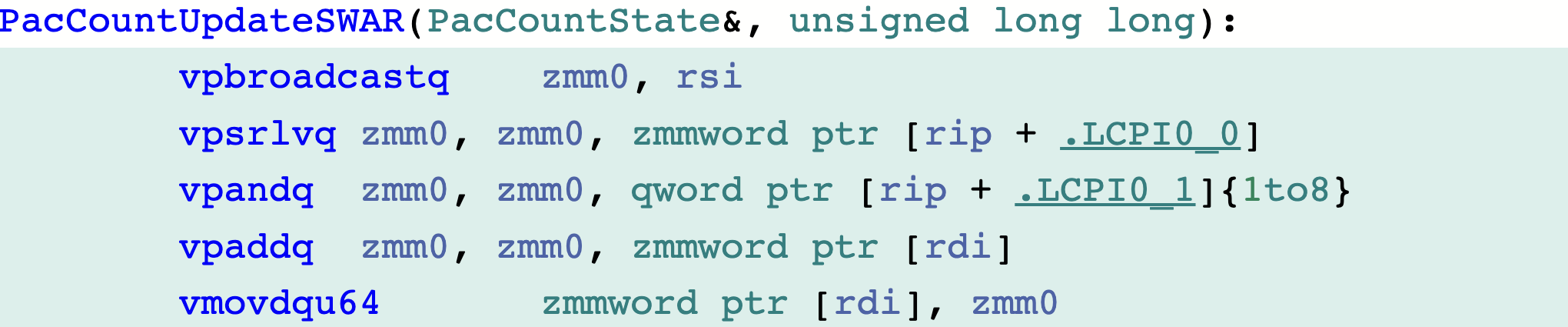}
   \includegraphics[width=\linewidth]{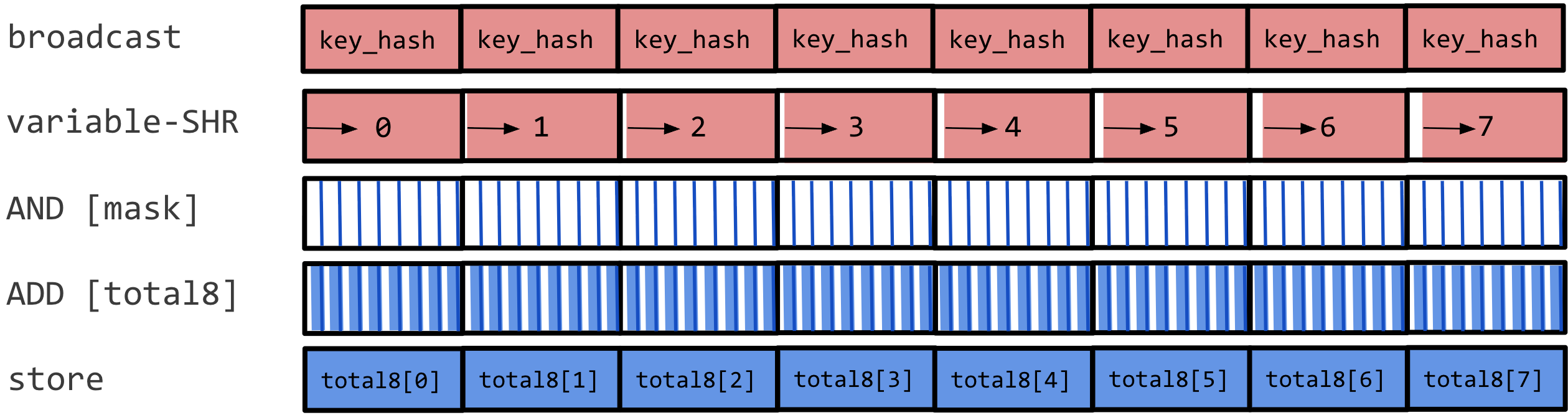}
\vspace*{-7mm}
  \caption{godbolt.org screenshots of  \texttt{PacCountUpdateSWAR()} (top), that updates 64-bits \texttt{probabilistic\_total8}, which consists of 8x "virtual" 8-bits lanes. It uses one AND that masks 8 bits and the ADD adds these (0 or 1)s to 8x 8-bits numbers inside the 64-bits integer. The loop was reduced to 8 iterations (down from 64) because of this SWAR (SIMD Within A Register). Auto-vectorization makes the loop disappear: SIMD+SWAR. Five AVX512 instructions update 64 counters!} 
\label{fig:godbolt}\vspace*{-4mm}
\end{figure}


We also applied SWAR+Cascading to \texttt{SUM/AVG} . Their state keeps five arrays of 64 values: forming five "levels" of  8, 16, 32, 64 resp. 128 bits aggregates.
Their update kernel decides based on the magnitude of the new value, which level to add the value to. It keeps for each level an exact sum, and if the new value makes that exact sum overflow, it is reset to 0 and the level is flushed before doing the update. The SUM/AVG implementations are identical, both also keep an exact update count; AVG just divides all 64 aggregates by this count at finalization. 

\vspace*{2mm}\noindent{\bf Buffering.}
The total state for cascading SUM/AVG is large: 64*(1+2+ 4+8+16)$\simeq$2KB. DuckDB allows aggregates to allocate memory at runtime, and we use this for {\bf lazy allocation} of the cascading levels. DuckDB's state-of-the-art relational Aggregate operator~\cite{kuiperaggr}, where our stochastic aggregation extension functions hook into, uses a parallel and adaptive algorithm, whose first phase consists of thread-local aggregation that also partitions the data. It creates small hash tables which attempt to eliminate duplicates and aggregate values immediately (early aggregation). However, when these hash tables grow large, it stops inserting into them and starts over, creating a new hash table and partitions, allowing the previous partitions to be swapped to disk. In the second phase of the algorithm, each thread gets assigned all matching partitions and then \emph{combines} them and then \emph{finalizes} the aggregates. 

To limit memory usage, we implemented {\bf buffering}, where a small state record just has space for three \texttt{(value,key\_hash)} pairs, and the \emph{update} method only appends a new value there. On receiving the fourth value, the real Cascading state is allocated on need, and the buffer gets flushed, updating the aggregate with four values in batch. 
Buffering ensures that our aggregate arrays are only allocated when values actually get aggregated together.

\vspace{2mm}\noindent{\bf Approximation.}
Sums over large tables can grow very large, and therefore often end up requiring the largest DuckDB numeric datatype \texttt{HUGEINT} (128-bits).
This datatype is not natively supported by CPUs and therefore slow to compute on. It makes our state very large (64x16=1KB) and also leaves no room for SIMD execution. 
However, given the fact that our stochastic aggregates will eventually be noised, and the noise is at least a few percent of the magnitude (given MI=1/128), we do not need to compute exact aggregates in the first place.
A simple strategy that SIMD-PAC-DB follows for summing \texttt{DOUBLE} values is to use \texttt{FLOAT} state only.

For integers, we use 16-bit sums, organized in {\bf 25 lazily allocated levels} which cascade every 4 bits.
An incoming value $v$ is routed to level~$\ell(v) = \lfloor(\mathit{msb}(|v|) - 8)/4\rfloor$ based on its most significant bit.  On overflow, only the upper 12~bits of each counter cascade: $C_{k+1}[j] \gets C_{k+1}[j] + (C_k[j] \gg 4)$, so the worst-case relative error is $2^{-12} \approx 0.024\%$ --- negligible in comparison PAC noising. 

We use 16-bits counters because we want to use an as thin lane as possible, but 8-bits counters would provide too little precision, and use quite many (25) per 4-bits staggered counters  because staggering every 8-bits would similarly lose significant precision. We limit to 25 counters because this covers 24*4+16=112 bits, which is enough for summing large columns of (64-bits) \texttt{BIGINT}.

The Approximation state contains 25 pointers (200 bytes) to dynamically allocated level arrays. In memory-heavy aggregations with many ($K$) distinct keys and $N$ rows, many pointers will stay null, because each aggregate will need just $\sim$log($N/K$)/4 levels, and $N/K$ is small if $K$ is large.
Therefore we re-use the back part of this pointer array, as long as it is still empty, to store the first level we allocate (64x16-bits=128 bytes).
This optimization we call {\bf inlining}.
In distributions with large $K$, close to $N$, level flushes are rare such that each aggregate will allocate typically one level, reducing memory footprint to almost the minimum (200$\simeq$128 bytes). 

For approximated signed integers SUM/AVG, we decided to keep a complete second state for negative numbers, stored negated. 
At finalization we subtract the two sides:
$\mathit{result}[j] = 2\cdot(C_{\text{pos}}[j] - C_{\text{neg}}[j])$.
This {\bf Two-sided SUM} helps maintaining the precision of the approximate sums in distributions with a mix of positive and negative numbers, that could otherwise cancel each other. 
Columns with a signed type like \texttt{DECIMAL} often contain positive numbers only; and because we lazily allocate the negated side state, we typically now get unsigned performance on signed types.

\vspace{2mm}\noindent{\bf Pruning.}
\texttt{pac\_[noised\_]min/max} maintain a running global bound $g$ --- the worst extreme of all 64 counters ($g = \min_j E[j]$ for MAX; $g = \max_j E[j]$ for MIN).  Incoming values that cannot improve $g$ skip the update entirely.  The bound is refreshed every 2048 updates to amortize cost.  This optimization works on all distributions except monotonically increasing (MAX) or decreasing (MIN) ones -- these are adversarial because the aggregate changes on every update.

\vspace*{2mm}\noindent{\bf Diversity Check.} For NULL handling, all aggregates maintain an accumulator that gets XOR-ed with \texttt{key\_hash} on every update; at finalization its 0s indicate in which worlds it never received an update.
We exploit this accumulator to throw a runtime error at finalization if (i) the aggregate was updated many times but (ii) close to 32 worlds never received an update. This indicates it was only updated with a single \texttt{key\_hash}. This happens when one does GROUP BY \emph{pu} or something strongly correlated to it.
While such queries should be rejected by the compiler, all aggregate functions perform this check for extra safety.

\section{Evaluation}
\label{sec:evaluation}
We evaluate DuckDB-PAC along performance and utility, identifying the following research questions:

\begin{description}
  \item[RQ1] What is the performance of our new stochastic aggregation functions, and how do the optimizations proposed in Section~\ref{sec:simd} impact this?
  \item[RQ2] How much overhead do PAC aggregates and joins add relative to default query execution in analytical benchmarks?
  \item[RQ3] How do we define the quality of the approximate query answers, and how good is it in analytical benchmarks?
  \item[RQ4] How can we characterize the current SQL coverage of DuckDB-PAC, and what would be the next steps to enlarge it?
\end{description}
\noindent
We benchmark on TPC-H, ClickBench, and SQLStorm, and run all experiments on four machines, with RAM size in GB:
\begin{itemize}
  \item MacBook Pro -- 12-core ARM M2 Max, 32GB, SSD.
  \item c8gd.4xlarge -- 16-core ARM AWS Graviton4, 32GB, SSD.
  \item c8i.4xlarge  -- 16-core x86 Intel Granite Rapids, 32GB, io2.
  \item c8a.4xlarge  -- 16-core x86 AMD EPYC 9R45, 32GB, io2.
\end{itemize}

All experiments use DuckDB v1.4.2 with the DuckDB-PAC extension compiled against the same source tree, with Clang 22.1.0. 

\begin{figure}[ht]
  \centering
  \includegraphics[width=\linewidth]{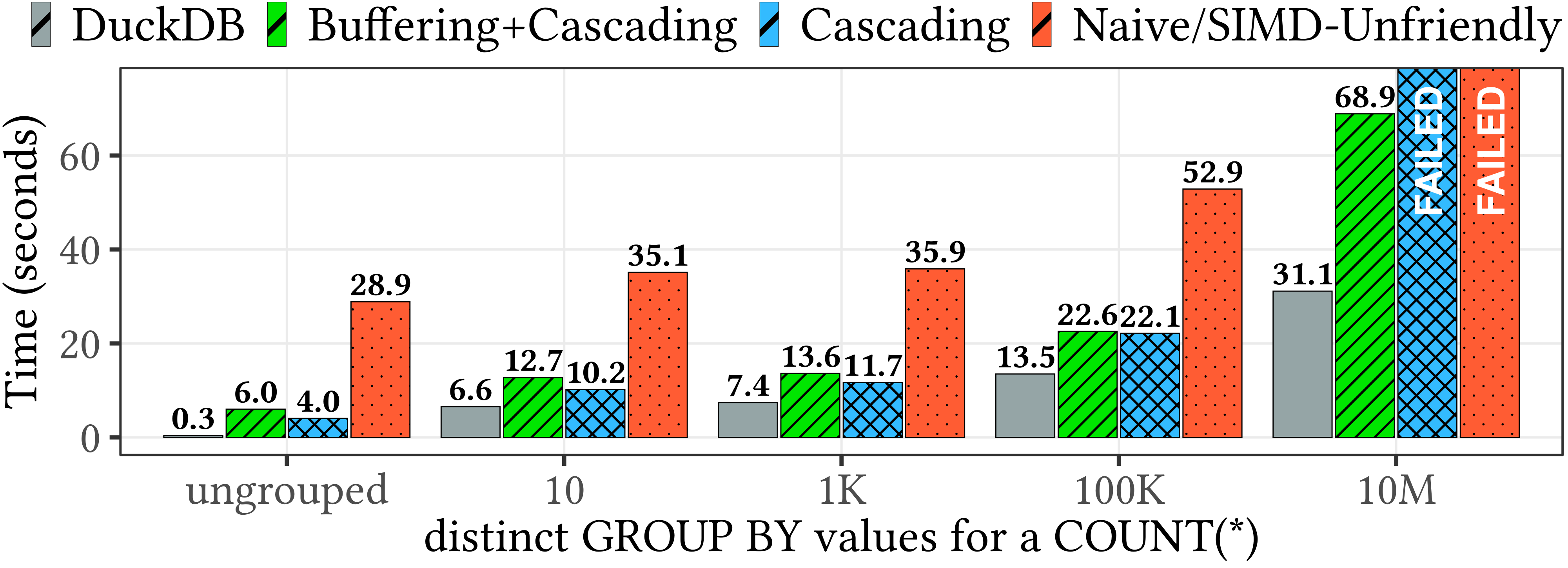}\vspace*{-4mm}
  \caption{\textbf{PAC Count optimization impact on Intel (Granite Rapids), 1G rows, scattered groups.} Cascading 8-bit into 64-bit counters enables SIMD improvements over the Na\"ive approach. Buffering adds little overhead, but reduces RAM footprint with many groups --- the others OOM at 10M.}
  \label{fig:count}\vspace*{-4mm}
\end{figure}

\begin{figure}[ht]
  \centering
  \includegraphics[width=\linewidth]{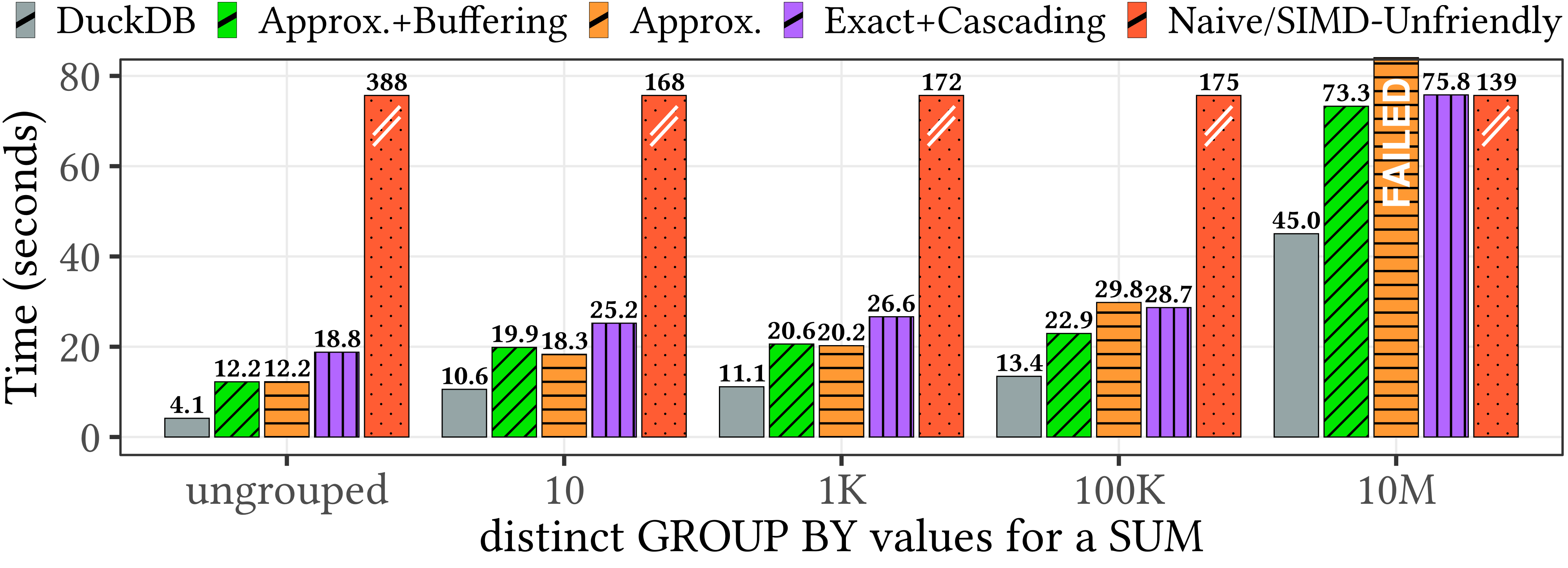}\vspace*{-4mm}
  \caption{\textbf{PAC Sum optimization impact on Macbook, 1G rows, scattered groups. The Approximate sum with 25 lazy counter levels is faster than the Cascading SUM, because it only operates in thin 16-bit SIMD lanes. Buffering adds little overhead but  prevents OOM in case of many groups (10M).}}
  \label{fig:sum}\vspace*{-4mm}
\end{figure}

\begin{figure}[ht]
  \centering
  \includegraphics[width=\linewidth]{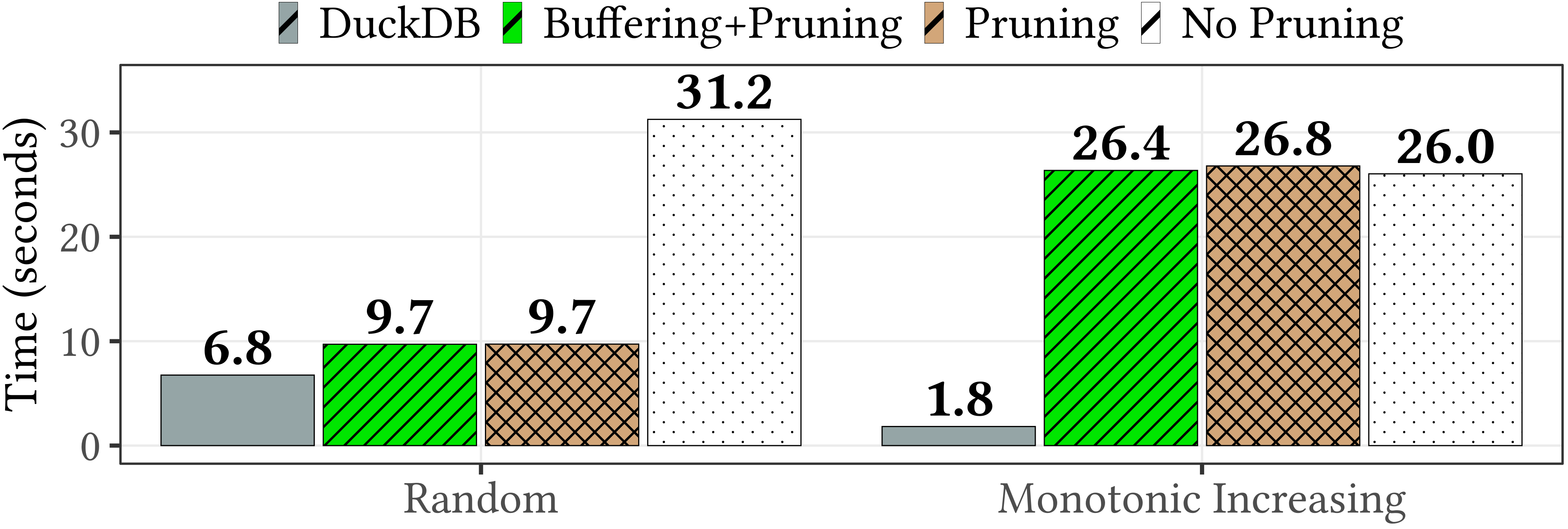}\vspace*{-4mm}
  \caption{\textbf{PAC MAX optimization impact on Graviton4, 1G rows, ungrouped. Bound pruning (green) improves \texttt{pac\_max()} performance $3\times$ on random data (white) and within only $1.4\times$ of standard DuckDB MAX (grey). On monotonically increasing data, all variants converge.}\vspace*{-5mm}
}\label{fig:max}
\end{figure}

\subsection{Aggregation Optimizations (RQ1)}
\label{sec:microbench}

We generate synthetic tables in DuckDB with row counts $N \in \{10M, 100M, 1G\}$
and vary the value distribution (uniform random, monotonically increasing, monotonically decreasing), domain size (tiny: $0$--$100$; small: $0$--$10^4$; medium: $0$--$10^5$; large: $0$--$10^6$).
We aggregate ungrouped, and with $K \in \{ 10, 1K, 100K, 10M\}$ distinct GROUP BY keys, which appear scattered or consecutively.

Figure~\ref{fig:sum} shows micro-benchmark performance of the query:\\
\texttt{\small SELECT pac\_noised\_sum(hash,val) FROM tab\_1G [GROUP BY key]}\\ and Figure~\ref{fig:count} similarly for \texttt{\small pac\_noised\_count(hash)} . These queries run as SQL in the DuckDB shell with our extension loaded, and were configured with \texttt{\small SET threads=4; SET max\_memory='30G'; SET max\_temp\_directory\_size='90GB'}. The DuckDB aggregation runs in parallel, doing \texttt{state.update(value)} calls to the aggregates. Given the random order of the keys, all threads will see all groups, and thus have thread-local aggregation state for all groups. When all rows have been seen, the states are combined by a thread that calls \texttt{state.combine(other)} on all aggregate states for the same key, consolidating them into a single \texttt{state}. Eventually \texttt{state.finalize()} is called. Our PAC fused \texttt{pac\_noised\_..()} aggregates, then calls \texttt{pac\_noised(state)} to generate a single noised answer.

\begin{table}[t]
\centering
\caption{Approximate SUM accuracy: Two-Sided SUM wins.}\label{tab:sum-stability}\vspace*{-3mm}
\footnotesize
\renewcommand{\arraystretch}{0.85}
\begin{tabular}{lrrrrrr}
\toprule
& \multicolumn{3}{c}{\textbf{Single Sum}} & \multicolumn{3}{c}{\textbf{Two-Sided Sum (pos/neg)}} \\
\cmidrule(lr){2-4} \cmidrule(lr){5-7}
\textbf{Distribution} & \textbf{\% Err} & $z^2$ & \textbf{$\sigma^2$ ratio} & \textbf{\% Err} & $z^2$ & \textbf{$\sigma^2$ ratio} \\
\midrule
all\_same         & 0.16 &  2.52 &  1.00 & 0.07 & 0.51 & 0.97 \\
bimodal           & 0.20 &  2.18 &  0.97 & 0.11 & 0.61 & 0.99 \\
exponential       & 0.30 &  1.95 &  0.92 & 0.14 & 0.47 & 0.99 \\
negative\_mixed   & 113.30 & 13.31 & 209.88 & 22.37 & 0.004 & 0.99 \\
sparse\_large     & 0.10 &  0.0007 & 1.00 & 0.05 & 0.0002 & 1.00 \\
uniform\_bigint   & 0.21 &  3.74 &  0.97 & 0.12 & 1.22 & 1.03 \\
uniform\_int      & 0.21 &  3.58 &  1.01 & 0.12 & 1.12 & 1.00 \\
uniform\_smallint & 0.24 &  5.30 &  0.97 & 0.14 & 1.82 & 1.00 \\
uniform\_tinyint  & 0.14 &  1.98 &  0.92 & 0.07 & 0.52 & 1.01 \\
zipf\_like        & 0.13 &  0.0002 & 1.00 & 0.05 & 0.0  & 1.00 \\
\bottomrule
\end{tabular}\vspace*{-4mm}
\end{table}
\begin{figure}[t]
  \centering
  \includegraphics[width=\linewidth]{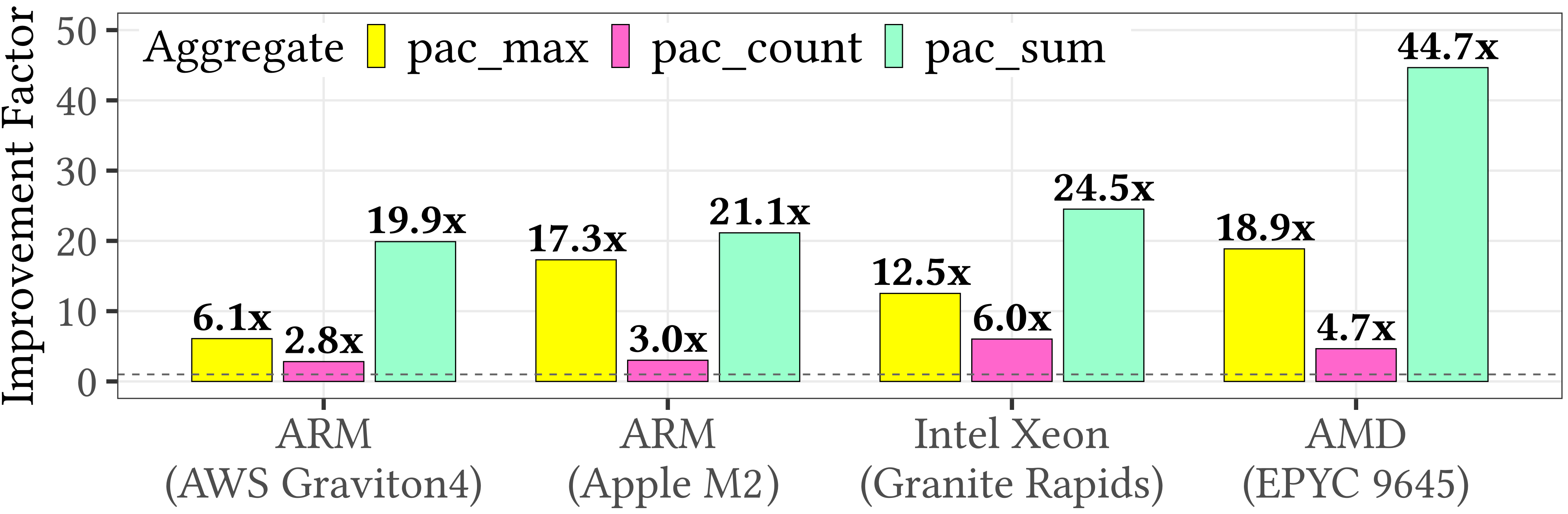}\vspace*{-4mm}
  \caption{\textbf{The impact of SIMD optimizations is largest on SUM and AMD (100M rows, ungrouped, avg of all column types).
}}\label{fig:simd_improvements}\vspace*{-9mm}
\end{figure}

\vspace*{2mm}\noindent{\bf Na\"ive SIMD-unfriendly.} 
The COUNT and SUM micro-benchmarks (Figures~\ref{fig:count} and \ref{fig:sum}) show slow performance for the \texttt{if-then} approach on the plain datatype, with the biggest disadvantage in SUM. Figure~\ref{fig:simd_improvements} confirms this happens across all platforms. 
The reason SUM is affected most is that DuckDB returns a \texttt{HUGEINT} result there, and arithmetic on such 128 bits is slow -- causing significant cost given that \texttt{pac\_sum} must update on average 32 such counters.
Another reason for the slow performance is branch mispredictions.

\vspace*{2mm}\noindent{\bf Cascading.} 
Our standard build generates portable 256-bit AVX2 on x86 instead of AVX512, which results in 10 assembly instructions instead of the 5 shown in Figure~\ref{fig:godbolt} --- however, this performs equally well: both Intel and AMD machines are able to retire 2x more AVX2 instructions as AVX512 in this kernel. Performance with 10 GROUP BY keys is lower than ungrouped because of the DuckDB hash table effort in the latter case. At 100K keys, performance noticeably drops, likely due to cache misses (the Cascading COUNT state size is 64*(8+1)=576 bytes, so each thread allocates 58MB of state). 
At 10M groups, every group will be updated just 100 times (1G rows).
We found that, due to the swapping of hash table partitions, the \texttt{state.update(val)} call will not find duplicates and hence creates a new state for every value. The need to allocate 576GB causes Cascading to fail with Out Of Memory (OOM), just like Na\"ive.

\vspace*{2mm}\noindent{\bf Buffering.} 
In the GROUP BY $K$=10M case, DuckDB finds the duplicate keys in the second phase of Aggregation, when threads call \texttt{state.combine(other)}. The Buffering approach delays the allocation of arrays for 64 aggregates until then. The fixed part of the Buffering-optimized aggregate state, allocated in the aggregation hash tables, is only 32-40 bytes. The need to allocate 32GB still makes DuckDB swap to disk, slowing it down, but it does not fail. Buffering is important to avoid OOMs, given the necessarily large state of all our PAC aggregates.
Figures~\ref{fig:count} and \ref{fig:sum} show that Buffering incurs little overhead in the ungrouped case, and when the amount $K$ of distinct GROUP BY values is small. But for SUM at $ K=100K$, it provides an advantage because it updates batches of four values at a time into the same aggregate. Therefore, CPU cache misses incurred when accessing the state are amortized more effectively.
\newpage

\vspace*{0mm}\noindent{\bf Pruning.} 
The pruning optimization in MIN/MAX  makes other performance optimizations in MIN/MAX largely irrelevant, and brings the performance of the \texttt{pac\_noised\_max(hash, val} within 50\% of DuckDB's \texttt{max(val)}; unless the data distribution is explicitly adversarial (monotonically increasing, for MAX).

\vspace*{2mm}\noindent{\bf Approximation.} 
Figure~\ref{fig:sum} shows that Cascading SUM is 50\% slower than Approximate in the ungrouped case.
An important question is how the SUM Approximation affects accuracy: 
Table~\ref{tab:sum-stability} shows accuracy across 10  distributions (1M values each).
The unnoised \texttt{pac\_sum(hash, val)} returns 64 sums, making it a handy tool to quickly compute 64 random half-samples for statistics.
We compute $z^2 = \text{RMSE}^2 / \text{Var}(\text{approx})$: values well below 1 mean the approximation noise is negligible compared to the statistical noise already present in each random sample --- where downstream confidence intervals remain valid. The variance ratio $\text{Var}(\text{exact}) / \text{Var}(\text{approx})$ is a sanity check: it should be close to 1, confirming that the approximation preserves the natural spread of sample totals.
The original signed counters fail on mixed-sign data (``negative\_mixed''): when positive and negative values cancel within the counter hierarchy, approximate totals collapse to zero, killing
 variance (var\_ratio = 210) and inflating $z^2$ to 13.3. The {\bf Two-sided Sum} technique that we adopted eliminates cancellation ($z^2$ drops to 0.004, var\_ratio normalizes to ~1). 
As a bonus, unsigned counters gain one bit of precision, improving percentage error across all distributions.

\begin{figure}[t]
    \centering
    \includegraphics[width=\linewidth]{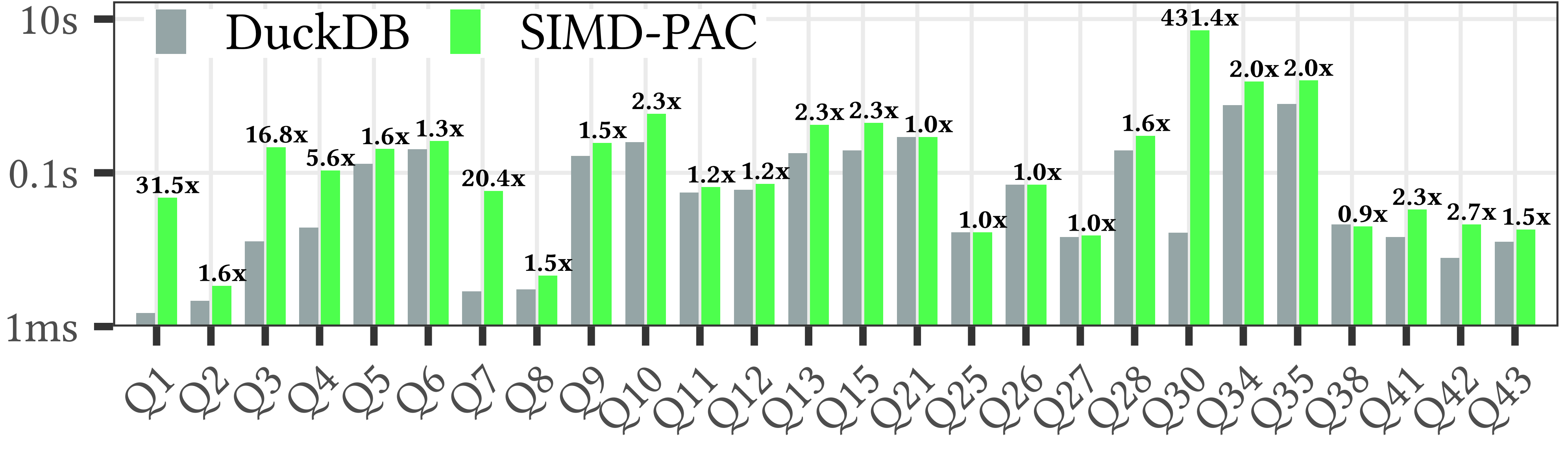}
    \vspace*{-9mm}
    \caption{\textbf{ClickBench query runtimes: DuckDB-PAC rejects 17 queries. Of the remaining 26, 21 run within $3\times$ of plain DuckDB on the Graviton machine. Outlier Q30 is extremely aggregation-heavy: 90 sums that get transformed. }}
    \label{fig:clickbench}
    \vspace*{-7mm}
  \end{figure}

\subsection{Performance Impact (RQ2)}
\label{sec:analytical}

\vspace*{0mm}\noindent{\bf ClickBench.} We declare the \texttt{hits} table as the PU and mark \texttt{UserID} and \texttt{ClientIP} as protected columns, and then we run the ClickBench\footnote{\href{https://benchmark.clickhouse.com/}{https://benchmark.clickhouse.com/}} benchmark. Figure~\ref{fig:clickbench} shows the results on the Graviton4 machine: of the 43 queries, 20 of them incur less than $2.5\times$ overhead, while 17
are correctly rejected by the PAC checker (10 directly return protected columns, and 7 involve not implemented operators or unstable results). Since the PU is defined directly on the scanned table, DuckDB-PAC requires no PU-join; the measured overhead reflects only the cost of hashing PAC operators.

\begin{figure*}[ht]
  \centering
  \includegraphics[width=\linewidth]{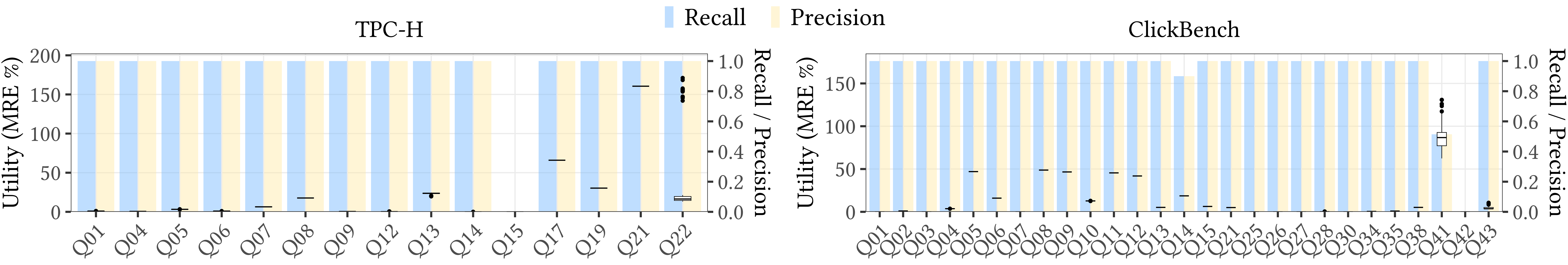}\vspace*{-4mm}
  \caption{\textbf TPC-H achieves a median MAPE of 3.2\% with perfect recall on 14 of 15 queries (Q15 returns zero recall because it joins on an exact equality between two independently noised aggregates. Similarly, ClickBench achieves a median of 3.7\% with perfect recall on 23 of 26 queries (Q42 performs \texttt{OFFSET 10000} on a noised ordering, yielding entirely different rows).}
  \label{fig:utility}
  \vspace*{-3mm}
\end{figure*}

\vspace*{2mm}\noindent{\bf TPC-H.} We run TPC-H at SF30 on the Apple M2 Max, declaring \texttt{customer} as the PU with five protected columns (\texttt{c\_custkey}, \texttt{c\_name}, \texttt{c\_address}, \texttt{c\_acctbal}, \texttt{c\_comment}). PAC links propagate the \emph{pu} hash and inject joins from lineitem $\to$ orders. Figure~\ref{fig:tpch} -- shown already on page 1 -- isolates their performance, showing that in most TPC-H queries which slow down, these (unavoidable) PU key joins are the main cause.\footnote{ClickBench has no joins, hence no PU key joins either.} Excluding the aggregation-heavy Q01 ($5.2\times$), average slowdown for the remaining queries is $1.6\times$. 
\newpage

\subsection{Quality (RQ3)}
\label{sec:utility}

\begin{figure}[t]
  \centering
  \includegraphics[width=\linewidth]{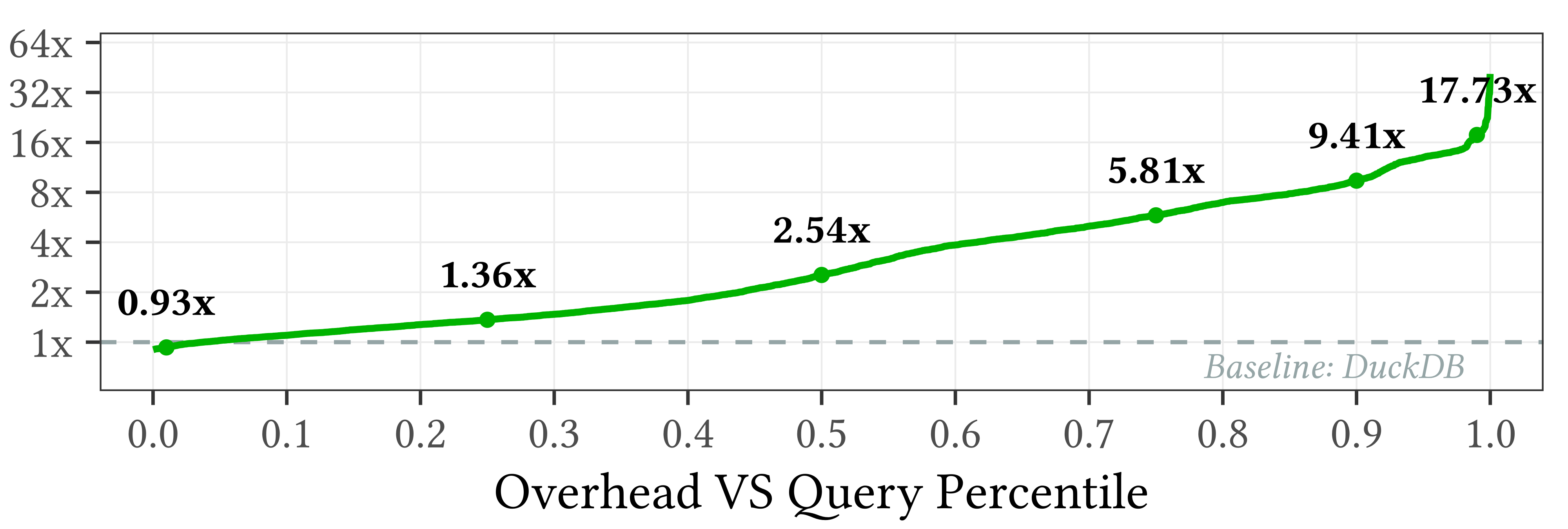}
  \vspace*{-8mm}
  \caption{\textbf{The median overhead of PAC is 2.54$\times$, and over 90\% of queries remain below 10$\times$ overhead.
}}
  \label{fig:sqlstorm}
  \vspace*{-3mm}
\end{figure}

\noindent\textbf{PacDiff.} To facilitate result quality evaluation, the PAC Rewriter supports \texttt{\bf set pac\_diffcols=$<$X$>$;}. If set, this will rewrite any query into a FULL OUTER JOIN  between the original and the PAC-privatized query, using the first $X$ columns of the result as equi-join key between them. On top sits a new \texttt{PacDiff} operator that substitutes all remaining (numeric) column values in the result with their Mean Absolute Percentage Error: \emph{abs(result - exact)/abs(exact)}. It outputs a \textbf{utility} metric which is the avg MAPE over all these columns.
\texttt{PacDiff} performs a Unix-style diff, classifying result rows present in both as "=", exact rows missing in the privatized result as "-", and rows only present in the privatized result as "+". We define \textbf{recall} as \emph{\#=rows/(\#=rows+\#-rows)} and \textbf{precision} as \emph{\#=rows/(\#=rows+\#+rows)}.
\texttt{pac\_diffcols} is a powerful tool to quickly assess the quality of the results of arbitrary privatized queries.

We first use this tool to investigate the effect of our lambda expression rewrites on utility.
In e.\:g., TPC-H Q08, a market share is computed by dividing the sum of sales in one region by the sum of sales. We  evaluate this division vector-lifted on the unnoised \texttt{pac\_sum()} result vectors containing the 64 outcomes for 64 possible worlds, into a vector of 64 market shares, which gets noised once:

{\small\begin{verbatim}
SELECT pac_noised(
       list_transform(
         list_zip(pac_sum(pu, CASE WHEN nation = 'BRAZIL'
                                   THEN sales ELSE 0 END)),
                   pac_sum(pu, sales), lambda x: x[1] / x[2]))
FROM ..compute sales from nation by year.. GROUP BY o_year;
\end{verbatim}}

The na\"ive counterpart would be to compute and noise both sums immediately with \texttt{pac\_noised\_sum()}, both yielding a single result, and dividing these two scalars. This approach noises twice, and also mixes outcomes from different worlds (increasing error).

{\small\begin{verbatim}
SELECT pac_noised_sum(pu, CASE WHEN nation = 'BRAZIL'
                               THEN sales ELSE 0 END)),
       / pac_noised_sum(pu, sales)
FROM ..compute sales from nation by year.. GROUP BY o_year;\end{verbatim}}

We micro-benchmark this rewrite by constructing 20 queries over TPC-H at SF1, each computing a grouped sum of $N$ ratio expressions of the form $100 \cdot \texttt{pac\_sum}(h, e_i) / \texttt{pac\_sum}(h, e)$, $h = \texttt{hash}(\texttt{o\_custkey})$. 
Figure~\ref{fig:ratio} shows that the lambda approach maintains consistently low error, as $N$ increases (10 runs).

Then, we execute each of all non-rejected TPC-H and ClickBench queries 100 times at SF30 and compare each PAC-privatized output against the corresponding non-private reference result. Figure~\ref{fig:utility} shows the per-query distribution of utility over the 100 runs with $mi = 1/128$: across both benchmarks, the majority of queries achieve perfect recall and precision, confirming that PAC noise does not eliminate or fabricate result groups: the median MAPE is 3.2\% for TPC-H and 3.7\% for ClickBench.


\begin{figure}[t]
  \centering\vspace*{-2mm}
  \includegraphics[width=\linewidth]{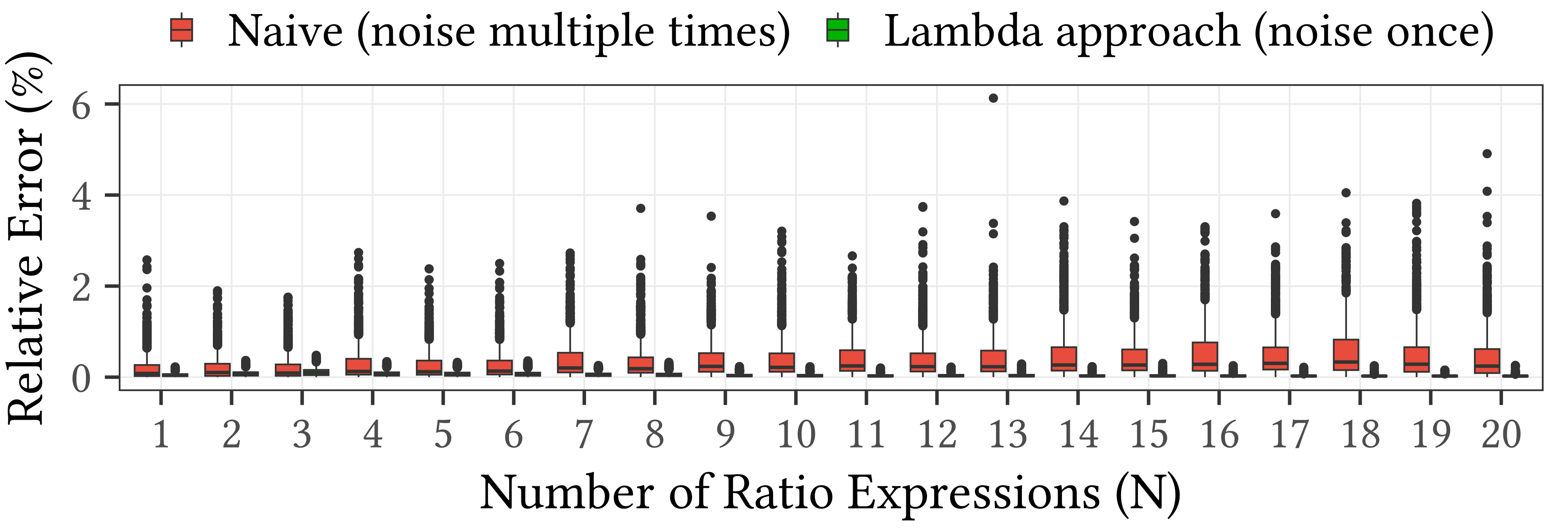}
  \vspace*{-8mm}
  \caption{\textbf{The na\"ive approach (red) independently noises each aggregate in an expression, causing growing error with more aggregates. Our lambda approach vector-lifts expressions using \texttt{list\_transform} (green) and then applies \texttt{pac\_noise()} only once, keeping error below $0.02\%$.}}
  \label{fig:ratio}
  \vspace*{-4mm}
\end{figure}

\subsection{Coverage (RQ4)} 
\label{sec:coverage}
To stress-test the PAC rewriter (RQ4), we use SQLStorm~\cite{sqlstorm2025} to generate 15301 of diverse DuckDB queries over the TPC-H schema (SF1, timeout 5s) and report the runtime slowdown in Figure~\ref{fig:sqlstorm}. Of all generated queries, $34.65\%$ are successfully PAC-rewritten and executed, and a further $14.92\%$ are correctly passed through unchanged because they reference no PU-linked table. An additional $6.94\%$ are correctly refused for attempting to release protected columns. The remaining queries exercise constructs not yet supported: window functions ($28.88\%$), recursive CTEs ($5.61\%$), and unsupported aggregate functions ($3.39\%$, including \texttt{string\_agg}, \texttt{stddev}, \texttt{first}). This is the low-hanging fruit in SIMD-PAC-DB to increase the percentage of privatizable queries to $>$80\%.



\section{Related Work}
\label{sec:related}
\vspace*{0mm}\noindent{\bf Differential Privacy.}
Differential Privacy (DP)~\cite{dwork2006calibrating, dwork2008survey, dwork2014foundations} guarantees that query outputs change negligibly when one individual's data is added or removed, by adding noise calibrated to query \emph{sensitivity} and a privacy budget $\epsilon$. The composition theorem~\cite{kairouz2015composition} bounds the total privacy cost of multiple releases. While strong and compositional, DP is costly in practice: sensitivity must be known or conservatively bounded in advance, worst-case noise calibration degrades utility on complex analytical queries, and integrating DP into general-purpose engines requires expert reasoning about sensitivity across joins, groupings, and nested queries.

\vspace*{2mm}\noindent{\bf PAC Privacy.}
PAC privacy~\cite{hanshen2023crypto} bypasses the need for worst-case sensitivity bounds by measuring query stability via \textit{probabilistic subsampling}.
This makes PAC privacy significantly more accurate for analytical workloads without requiring the analyst to know sensitivity in advance~\cite{sridhar2026pacdb}. 
While the original PAC-DB independently resampled the secret subset for every scalar cell in the query output, yielding only \emph{cell-level} membership privacy, we leverage a recent adaptive composition theorem~\cite{zhu2026pacadversarial} to maintain a persistent secret subset across all releases within a query. This ensures stronger provable guarantees against \emph{query-level} membership inference attacks, where an adversary attempts to infer if a PU was \emph{ever} used to answer a query.
Our work is the first to compile these advanced PAC semantics directly into a database engine's operator model.

\vspace*{2mm}
\noindent{\bf Differentially Private Database Systems.}
Several systems integrate DP into general-purpose query processing, tackling different aspects of the sensitivity problem. PINQ~\cite{mcsherry2009pinq} introduced operator-level sensitivity tracking; wPINQ~\cite{proserpio2014wpinq} extended it to weighted datasets. 
GUPT~\cite{mohan2012gupt} avoids explicit sensitivity computation via sample-and-aggregate, similar to PAC, but offers weaker utility on complex queries because DP does not exploit data-dependent stability~\cite{zhu2026pacadversarial}.
FLEX~\cite{johnson2018flex} introduced elastic sensitivity for DP count queries with joins; CHORUS~\cite{johnson2020chorus} extended this to sums and averages via contribution clipping, though functions requiring the exponential mechanism still demand manual analyst effort. PrivateSQL~\cite{kotsogiannis2019privatesql} releases DP synopsis views that can answer an unlimited number of subsequent queries, but is confined to linear queries and requires a representative workload to be specified a priori. Wilson et al.~\cite{wilson2020dpbounded} address bounded user contributions directly in SQL; Yuan et al.~\cite{yuan2012lowrank} optimize batch queries under DP via low-rank mechanisms.
Beyond single-database settings, Shrinkwrap~\cite{bater2018shrinkwrap} reduces intermediate result sizes in differentially private data federations, while IncShrink~\cite{wang2022incshrink} combines incremental MPC with DP for outsourced databases. On the output side, HDPView~\cite{kato2022hdpview} releases DP materialized views for high-dimensional relational data, and DPXPlain~\cite{tao2022dpxplain} privately explains aggregate query answers to analysts. In dynamic settings, Dong et al.~\cite{dong2024continual,dong2023multiple} study continual observation of joins and answering multiple relational queries under DP, while Zhang et al.~\cite{zhang2024dpstream} demonstrate DP stream processing at scale. Fang et al.~\cite{fang2022shiftedinverse} propose a general mechanism for monotonic functions under user-level privacy; the $\epsilon$KTELO framework~\cite{zhang2020ektelo} composes DP computations modularly but still requires the analyst to reason about sensitivity.
Orthogonal to central DP, local DP~\cite{wang2024pripltree,localDP2020} and shuffle DP~\cite{bittau2017prochlo,erlingsson2019amplification} eliminate the need for a trusted curator by having contributors randomize their data before submission, at the cost of significantly higher noise. Cormode et al.~\cite{cormode2018marginal} study marginal release under local DP, while recent work pushes these models further: Luo et al.~\cite{luo2025rm2} address counting queries under shuffle DP, Xie et al.~\cite{xie2025cardinality} tackle DP cardinality estimation in streams, and He et al.~\cite{he2025triangle} consider triangle counting under edge local DP.

\newpage

\vspace*{-2mm}
\noindent{\bf Monte Carlo and Approximate Query Processing.}
MCDB~\cite{jampani2011mcdb} evaluates queries across multiple stochastic database instances using ``tuple-bundle'' processing, executing once across all Monte Carlo iterations simultaneously. SimSQL~\cite{simsql} extends this idea into a full in-database statistical computing system, compiling Monte Carlo-style repeated evaluation into efficient single-pass relational execution. Both are conceptually related to our SIMD PAC operators, which evaluate 64 subsamples in a single pass using bit-sliced hashes, but our approach compresses subsample membership into a 64-bit hash per PU rather than carrying full attribute arrays. Quickr~\cite{quickr} and BlinkDB~\cite{blinkdb} pursue approximate query answering with bounded error, trading accuracy for latency --- a complementary but orthogonal goal to privacy.

\vspace*{2mm}
\noindent{\bf SIMD Query Processing.}
Exploiting SIMD for query processing has a long history,  in data decompression~\cite{willhalm2009simdscan, simd-compression}, index search~\cite{simd-search}, bloom-filter lookup~\cite{simd-bf}, filter predicates~\cite{simd-filter} and regular expressions~\cite{simd-regexp}.
Polychroniou et al.~\cite{polychroniou2015simd,simd-towards} survey  vectorization for query processing and propose the VIP system~\cite{simd-db}.
SIMD-PAC-DB avoids using explicit SIMD instructions, instead relying on auto-vectorizing C++ compilers, for better portability and future-proofness -- similar to the recent FastLanes compression encodings~\cite{Afroozeh2023}.
Accelerating aggregate updates with SIMD had not been achieved before, likely because of the combination of random memory access to reach an aggregate state and limited computational work required.
The need to update 64 worlds with the same value make PAC aggregates very suitable for SIMD acceleration.



\section{Conclusion and Future Work}
SIMD-PAC-DB strongly improves ($>$10x) the performance of the recently proposed PAC-DB system -- which enabled black box privatized execution of arbitrary queries. We think that the combination of automatic privatization of arbitrarily complex queries, with fast query execution significantly advances the ease of use and functionality of privacy-aware data management in practice. 

SIMD-PAC-DB executes a query in a single pass thanks to new bit-parallel stochastic aggregate functions, for which we contribute SIMD-friendly optimizations. We formally prove that the SIMD-PAC-DB query rewriter produces plans semantically equivalent to PAC-DB, preserving its privacy guarantees and utility. Our evaluation shows good performance and utility on thousands of automatically privatized TPC-H, ClickBench and SQLStorm queries. We release DuckDB-PAC in open-source as a DuckDB v1.5 extension that is easy to get started with (\texttt{install pac from community;}).

Future work includes (i) research into the support for complex relational operators (e.g. window functions), (ii) further research into PAC-DB semantics regarding e.g. row filtering (categorical rather than continuous noising) as well as fine-grained noising methods for these (iii) research into supporting multiple privacy units (PUs) as well as cyclical join query shapes and recursion. Together, these directions suggest the opening of a new research agenda, that aims to deeply intertwine PAC privacy and data systems.
\begin{acks}
 We thank Mayuri Sridhar, Srini Devadas, Michael Noguera and Xiangyao Yu for inspirational conversations that led to this paper.
\end{acks}
\clearpage

\bibliographystyle{ACM-Reference-Format}
\bibliography{sample}

@misc{sridhar2026pacdb,
      author = {Mayuri Sridhar and Michael A. Noguera and Chaitanyasuma Jain and Kevin Kristensen and Srinivas Devadas and Hanshen Xiao and Xiangyao Yu},
      title = {{PAC}-Private Databases},
      howpublished = {Cryptology {ePrint} Archive, Paper 2026/238},
      year = {2026},
      url = {https://eprint.iacr.org/2026/238}
}

@INPROCEEDINGS{sridhar2025pac,
  author    = {Sridhar, Mayuri and Xiao, Hanshen and Devadas, Srinivas},
  booktitle = {2025 IEEE Symposium on Security and Privacy (SP)},
  title     = {{PAC-Private Algorithms}},
  year      = {2025},
  pages     = {3839-3857},
  doi       = {10.1109/SP61157.2025.00034},
  publisher = {IEEE Computer Society},
  address   = {Los Alamitos, CA, USA},
  month     = May
}

@misc{zhu2026pacadversarial,
  title        = {PAC-Private Responses with Adversarial Composition},
  author       = {Xiaochen Zhu and Mayuri Sridhar and Srinivas Devadas},
  year         = {2026},
  eprint       = {2601.14033},
  archivePrefix= {arXiv},
  primaryClass = {cs.LG},
  url          = {https://arxiv.org/abs/2601.14033},
}

@article{dwork2006calibrating,
  title   = {Calibrating Noise to Sensitivity in Private Data Analysis},
  author  = {Dwork, Cynthia and McSherry, Frank and Nissim, Kobbi and Smith, Adam},
  journal = {Theory of Cryptography Conference (TCC)},
  year    = {2006}
}

@article{dwork2008survey,
  title   = {Differential Privacy: A Survey of Results},
  author  = {Dwork, Cynthia},
  journal = {Proceedings of the 5th International Conference on Theory and Applications of Models of Computation},
  year    = {2008}
}

@article{dwork2014foundations,
  author  = {Cynthia Dwork and Aaron Roth},
  title   = {The Algorithmic Foundations of Differential Privacy},
  journal = {Found. Trends Theor. Comput. Sci.},
  volume  = {9},
  number  = {3--4},
  pages   = {211--407},
  year    = {2014},
  doi     = {10.1561/0400000042}
}

@inproceedings{kairouz2015composition,
  author    = {Kairouz, Peter and Oh, Sewoong and Viswanath, Pramod},
  title     = {The composition theorem for differential privacy},
  booktitle = {Proceedings of the 32nd International Conference on Machine Learning},
  pages     = {1376--1385},
  year      = {2015},
  series    = {ICML'15}
}

@inproceedings{mcsherry2009pinq,
  title     = {Privacy Integrated Queries: An Extensible Platform for Privacy-Preserving Data Analysis},
  author    = {McSherry, Frank},
  booktitle = {SIGMOD},
  year      = {2009}
}

@article{proserpio2014wpinq,
  author  = {Davide Proserpio and Sharon Goldberg and Frank McSherry},
  title   = {Calibrating Data to Sensitivity in Private Data Analysis: A Platform for Differentially-Private Analysis of Weighted Datasets},
  journal = {Proc. VLDB Endow.},
  volume  = {7},
  number  = {8},
  pages   = {637--648},
  year    = {2014}
}

@inproceedings{mohan2012gupt,
  author    = {Prashanth Mohan and Abhradeep Thakurta and Elaine Shi and Dawn Song and David E. Culler},
  title     = {{GUPT}: Privacy Preserving Data Analysis Made Easy},
  booktitle = {Proceedings of the ACM SIGMOD International Conference on Management of Data},
  year      = {2012},
  pages     = {349--360},
  publisher = {ACM}
}

@article{johnson2018flex,
  author  = {Noah Johnson and Joseph P. Near and Dawn Song},
  title   = {Towards Practical Differential Privacy for {SQL} Queries},
  journal = {Proc. VLDB Endow.},
  volume  = {11},
  number  = {5},
  pages   = {526--539},
  year    = {2018}
}

@inproceedings{johnson2020chorus,
  author    = {Noah Johnson and Joseph P. Near and Joseph M. Hellerstein and Dawn Song},
  title     = {{CHORUS}: A Programming Framework for Building Scalable Differential Privacy Mechanisms},
  booktitle = {2020 IEEE European Symposium on Security and Privacy (EuroS\&P)},
  year      = {2020},
  pages     = {535--551}
}

@article{kotsogiannis2019privatesql,
  author  = {Ios Kotsogiannis and Yuchao Tao and Xi He and Maryam Fanaeepour and Ashwin Machanavajjhala and Michael Hay and Gerome Miklau},
  title   = {{PrivateSQL}: A Differentially Private {SQL} Query Engine},
  journal = {Proc. VLDB Endow.},
  volume  = {12},
  number  = {11},
  pages   = {1371--1384},
  year    = {2019}
}

@article{wilson2020dpbounded,
  author  = {Royce Wilson and Celia Zhang and William Lam and Damien Desfontaines and Daniel Simmons-Marengo and Bryant Gipson},
  title   = {Differentially Private {SQL} with Bounded User Contribution},
  journal = {Proceedings on Privacy Enhancing Technologies},
  volume  = {2020},
  pages   = {230--250},
  year    = {2020}
}

@article{dong2024continual,
  author    = {Wei Dong and Zijun Chen and Qiyao Luo and Elaine Shi and Ke Yi},
  title     = {Continual Observation of Joins under Differential Privacy},
  journal   = {Proc. ACM Manag. Data},
  volume    = {2},
  number    = {3},
  articleno = {128},
  year      = {2024}
}

@article{dong2023multiple,
  author    = {Wei Dong and Dajun Sun and Ke Yi},
  title     = {Better than Composition: How to Answer Multiple Relational Queries under Differential Privacy},
  journal   = {Proc. ACM Manag. Data},
  volume    = {1},
  number    = {2},
  articleno = {123},
  year      = {2023}
}

@article{zhang2020ektelo,
  author    = {Dan Zhang and Ryan McKenna and Ios Kotsogiannis and George Bissias and Michael Hay and Ashwin Machanavajjhala and Gerome Miklau},
  title     = {$\epsilon${KTELO}: A Framework for Defining Differentially Private Computations},
  journal   = {ACM Trans. Database Syst.},
  volume    = {45},
  number    = {1},
  articleno = {2},
  year      = {2020}
}

@article{zhang2024dpstream,
  author  = {Bing Zhang and Vadym Doroshenko and Peter Kairouz and Thomas Steinke and Abhradeep Thakurta and Ziyin Ma and Eidan Cohen and Himani Apte and Jodi Spacek},
  title   = {Differentially Private Stream Processing at Scale},
  journal = {Proc. VLDB Endow.},
  volume  = {17},
  number  = {12},
  pages   = {4145--4158},
  year    = {2024},
  doi     = {10.14778/3685800.3685833}
}

@article{tao2022dpxplain,
  author  = {Yuchao Tao and Amir Gilad and Ashwin Machanavajjhala and Sudeepa Roy},
  title   = {DPXPlain: privately explaining aggregate query answers},
  journal = {Proc. VLDB Endow.},
  volume  = {16},
  number  = {1},
  pages   = {113--126},
  year    = {2022}
}

@article{kato2022hdpview,
  author  = {Fumiyuki Kato and Tsubasa Takahashi and Shun Takagi and Yang Cao and Seng Pei Liew and Masatoshi Yoshikawa},
  title   = {HDPView: differentially private materialized view for exploring high dimensional relational data},
  journal = {Proc. VLDB Endow.},
  volume  = {15},
  number  = {9},
  pages   = {1766--1778},
  year    = {2022}
}

@article{wang2024pripltree,
  author  = {Leixia Wang and Qingqing Ye and Haibo Hu and Xiaofeng Meng},
  title   = {{PriPL-Tree}: Accurate Range Query for Arbitrary Distribution under Local Differential Privacy},
  journal = {Proc. VLDB Endow.},
  volume  = {17},
  number  = {11},
  pages   = {3031--3044},
  year    = {2024}
}

@inproceedings{fang2022shiftedinverse,
  author    = {Juanru Fang and Wei Dong and Ke Yi},
  title     = {Shifted Inverse: A General Mechanism for Monotonic Functions under User Differential Privacy},
  booktitle = {Proceedings of the 2022 ACM SIGSAC Conference on Computer and Communications Security},
  pages     = {1009--1022},
  year      = {2022},
  series    = {CCS '22}
}

@article{yuan2012lowrank,
  author  = {Ganzhao Yuan and Zhenjie Zhang and Marianne Winslett and Xiaokui Xiao and Yin Yang and Zhifeng Hao},
  title   = {Low-rank mechanism: optimizing batch queries under differential privacy},
  journal = {Proc. VLDB Endow.},
  volume  = {5},
  number  = {11},
  pages   = {1352--1363},
  year    = {2012}
}

@article{cormode2018marginal,
  author    = {Graham Cormode and Tejas Kulkarni and Divesh Srivastava},
  title     = {Marginal Release Under Local Differential Privacy},
  booktitle = {Proceedings of the 2018 International Conference on Management of Data (SIGMOD)},
  year      = {2018},
  pages     = {131--146}
}

@article{luo2025rm2,
  author    = {Qiyao Luo and Jianzhe Yu and Wei Dong and Quanqing Xu and Chuanhui Yang and Ke Yi},
  title     = {RM2: Answer Counting Queries Efficiently under Shuffle Differential Privacy},
  journal   = {Proc. ACM Manag. Data},
  volume    = {3},
  number    = {3},
  articleno = {210},
  year      = {2025},
  doi       = {10.1145/3725415}
}

@inproceedings{kuiperaggr,
  author       = {Laurens Kuiper and
                  Peter Boncz and
                  Hannes M{\"{u}}hleisen},
  title        = {Robust External Hash Aggregation in the Solid State Age},
  booktitle    = {{ICDE}},
  pages        = {3753--3766},
  publisher    = {{IEEE}},
  year         = {2024}
}

@inproceedings{xie2025cardinality,
  author    = {Dongdong Xie and Pinghui Wang and Quanqing Xu and Chuanhui Yang and Rundong Li},
  title     = {Efficient and Accurate Differentially Private Cardinality Continual Releases},
  journal   = {Proc. ACM Manag. Data},
  volume    = {3},
  number    = {3},
  articleno = {151},
  year      = {2025},
  doi       = {10.1145/3725288}
}

@inproceedings{he2025triangle,
  author    = {Yizhang He and Kai Wang and Wenjie Zhang and Xuemin Lin and Ying Zhang and Wei Ni},
  title     = {Robust Privacy-Preserving Triangle Counting under Edge Local Differential Privacy},
  journal   = {Proc. ACM Manag. Data},
  volume    = {3},
  number    = {3},
  articleno = {211},
  year      = {2025},
  doi       = {10.1145/3725348}
}

@article{localDP2020,
  author  = {Xingxing Xiong and Shubo Liu and Dan Li and Zhaohui Cai and Xiaoguang Niu},
  title   = {A Comprehensive Survey on Local Differential Privacy},
  journal = {Security and Communication Networks},
  volume  = {2020},
  pages   = {8829523},
  year    = {2020},
  doi     = {10.1155/2020/8829523}
}

@inproceedings{bittau2017prochlo,
  author    = {Andrea Bittau and others},
  title     = {Prochlo: Strong Privacy for Analytics in the Crowd},
  booktitle = {Proceedings of the 26th Symposium on Operating Systems Principles (SOSP)},
  year      = {2017},
  pages     = {441--459},
  publisher = {ACM}
}

@inproceedings{erlingsson2019amplification,
  author    = {\'Ulfar Erlingsson and Vitaly Feldman and Ilya Mironov and Ananth Raghunathan and Kunal Talwar and Abhradeep Thakurta},
  title     = {Amplification by Shuffling: From Local to Central Differential Privacy via Anonymity},
  booktitle = {Proceedings of the 2019 ACM-SIAM Symposium on Discrete Algorithms (SODA)},
  year      = {2019},
  pages     = {2468--2479}
}

@article{bater2018shrinkwrap,
  author  = {Johes Bater and Xi He and William Ehrich and Ashwin Machanavajjhala and Jennie Rogers},
  title   = {Shrinkwrap: efficient SQL query processing in differentially private data federations},
  journal = {Proc. VLDB Endow.},
  volume  = {12},
  number  = {3},
  pages   = {307--320},
  year    = {2018}
}

@inproceedings{wang2022incshrink,
  author    = {Chenghong Wang and Johes Bater and Kartik Nayak and Ashwin Machanavajjhala},
  title     = {IncShrink: Architecting Efficient Outsourced Databases using Incremental MPC and Differential Privacy},
  booktitle = {Proceedings of the 2022 International Conference on Management of Data (SIGMOD)},
  year      = {2022},
  pages     = {818--832},
  doi       = {10.1145/3514221.3526151}
}

@article{jampani2011mcdb,
  author    = {Ravi Jampani and Fei Xu and Mingxi Wu and Luis Perez and Chris Jermaine and Peter J. Haas},
  title     = {The Monte Carlo database system: Stochastic analysis close to the data},
  journal   = {ACM Trans. Database Syst.},
  volume    = {36},
  number    = {3},
  articleno = {18},
  year      = {2011},
  doi       = {10.1145/2000824.2000828}
}

@inproceedings{raasveldt2019duckdb,
  author    = {Mark Raasveldt and Hannes M{\"u}hleisen},
  title     = {DuckDB: an Embeddable Analytical Database},
  booktitle = {Proceedings of the 2019 International Conference on Management of Data (SIGMOD)},
  year      = {2019},
  pages     = {1981--1984},
  publisher = {{ACM}}
}

@inproceedings{raasveldt2020embedded,
  author    = {Mark Raasveldt and Hannes M{\"u}hleisen},
  title     = {Data Management for Data Science - Towards Embedded Analytics},
  booktitle = {10th Conference on Innovative Data Systems Research (CIDR 2020)},
  year      = {2020},
  publisher = {www.cidrdb.org}
}

@inproceedings{polychroniou2015simd,
  author    = {Orestis Polychroniou and Arun Raghavan and Kenneth A. Ross},
  title     = {Rethinking SIMD Vectorization for In-Memory Databases},
  booktitle = {Proceedings of the 2015 ACM SIGMOD International Conference on Management of Data},
  year      = {2015},
  pages     = {1493--1508},
  doi       = {10.1145/2723372.2747645}
}

@article{willhalm2009simdscan,
  author  = {Thomas Willhalm and Nicolae Popovici and Yazan Boshmaf and Hasso Plattner and Alexander Zeier and Jan Schaffner},
  title   = {SIMD-scan: ultra fast in-memory table scan using on-chip vector processing units},
  journal = {Proc. VLDB Endow.},
  volume  = {2},
  number  = {1},
  pages   = {385--394},
  year    = {2009},
  doi     = {10.14778/1687627.1687671}
}

@misc{appleby2008murmurhash,
  author       = {Austin Appleby},
  title        = {{MurmurHash}},
  year         = {2008},
  howpublished = {\url{https://github.com/aappleby/smhasher}}
}

@article{sqlstorm2025,
  author  = {Tobias Schmidt and Viktor Leis and Peter Boncz and Thomas Neumann},
  title   = {SQLStorm: Taking Database Benchmarking into the LLM Era},
  journal = {Proc. {VLDB} Endow.},
  volume  = {18},
  number  = {11},
  pages   = {4144--4157},
  year    = {2025}
}

@inproceedings{blinkdb,
author = {Agarwal, Sameer and Mozafari, Barzan and Panda, Aurojit and Milner, Henry and Madden, Samuel and Stoica, Ion},
title = {BlinkDB: queries with bounded errors and bounded response times on very large data},
year = {2013},
isbn = {9781450319942},
publisher = {Association for Computing Machinery},
address = {New York, NY, USA},
url = {https://doi.org/10.1145/2465351.2465355},
doi = {10.1145/2465351.2465355},
booktitle = {Proceedings of the 8th ACM European Conference on Computer Systems},
pages = {29–42},
numpages = {14},
location = {Prague, Czech Republic},
series = {EuroSys '13}
}

@inproceedings{quickr,
author = {Kandula, Srikanth et al.},
title = {Quickr: Lazily Approximating Complex AdHoc Queries in BigData Clusters},
year = {2016},
isbn = {9781450335317},
publisher = {Association for Computing Machinery},
address = {New York, NY, USA},
url = {https://doi.org/10.1145/2882903.2882940},
doi = {10.1145/2882903.2882940},
booktitle = {Proceedings of the 2016 International Conference on Management of Data},
pages = {631–646},
numpages = {16},
location = {San Francisco, California, USA},
series = {SIGMOD '16}
}

@inproceedings{hanshen2023crypto,
  author    = {Hanshen Xiao and
               Srinivas Devadas},
  title     = {{PAC} Privacy: Automatic Privacy Measurement and Control of Data Processing},
  booktitle = {Advances in Cryptology – CRYPTO 2023},
  volume    = {14082},
  pages     = {611--644},
  year      = {2023}
}

@inproceedings{carlini2022membership,
  title={Membership inference attacks from first principles},
  author={Carlini, Nicholas and Chien, Steve and Nasr, Milad and Song, Shuang and Terzis, Andreas and Tramer, Florian},
  booktitle={2022 IEEE symposium on security and privacy (SP)},
  pages={1897--1914},
  year={2022},
  organization={IEEE}
}

@inproceedings{simsql,
author = {Cai, Zhuhua and Vagena, Zografoula and Perez, Luis and Arumugam, Subramanian and Haas, Peter J. and Jermaine, Christopher},
title = {Simulation of database-valued markov chains using SimSQL},
year = {2013},
isbn = {9781450320375},
publisher = {Association for Computing Machinery},
address = {New York, NY, USA},
url = {https://doi.org/10.1145/2463676.2465283},
doi = {10.1145/2463676.2465283},
booktitle = {Proceedings of the 2013 ACM SIGMOD International Conference on Management of Data},
pages = {637–648},
numpages = {12},
location = {New York, New York, USA},
series = {SIGMOD '13}
}

@inproceedings{simd-compression,
  author       = {Benjamin Schlegel and
                  Rainer Gemulla and
                  Wolfgang Lehner},
   title        = {Fast integer compression using {SIMD} instructions},
  booktitle    = {{DaMoN}},
    pages        = {34--40},
  publisher    = {{ACM}},
  year         = {2010}
}

@inproceedings{simd-search,
  author       = {Benjamin Schlegel and
                  Rainer Gemulla and
                  Wolfgang Lehner},
  title        = {k-ary search on modern processors},
  booktitle    = {{DaMoN}},
  pages        = {52--60},
  publisher    = {{ACM}},
  year         = {2009}
  }

@inproceedings{simd-bf,
  author       = {Orestis Polychroniou and
                  Kenneth A. Ross},
  editor       = {Alfons Kemper and
                  Ippokratis Pandis},
  title        = {Vectorized Bloom filters for advanced {SIMD} processors},
  booktitle    = {{DaMoN}},
  pages        = {6:1--6:6},
  publisher    = {{ACM}},
  year         = {2014}
}

@inproceedings{simd-regexp,
  author       = {Evangelia A. Sitaridi and
                  Orestis Polychroniou and
                  Kenneth A. Ross},
  title        = {SIMD-accelerated regular expression matching},
  booktitle    = {{DaMoN}},
  pages        = {8:1--8:7},
  publisher    = {{ACM}},
  year         = {2016}
  }

@inproceedings{simd-filter,
  author       = {Hao Jiang and
                  Aaron J. Elmore},
  title        = {Boosting data filtering on columnar encoding with {SIMD}},
  booktitle    = {{DaMoN}},
   pages        = {6:1--6:10},
  publisher    = {{ACM}},
  year         = {2018}
  }

@article{Afroozeh2023,
    author = {Afroozeh, Azim and Boncz, Peter},
    title = {The FastLanes Compression Layout: Decoding $>$ 100 Billion Integers per Second with Scalar Code},
    year = {2023},
    publisher = {VLDB Endowment},
    volume = {16},
    number = {9},
    journal = {Proc. VLDB Endow.},
    month = {jul},
    pages = {2132–2144},
    numpages = {13}
}

@inproceedings{simd-towards,
  author       = {Orestis Polychroniou and
                  Kenneth A. Ross},
  title        = {Towards Practical Vectorized Analytical Query Engines},
  booktitle    = {{DaMoN}},
    pages        = {10:1--10:7},
  publisher    = {{ACM}},
  year         = {2019}
  }

@article{simd-db,
  author       = {Orestis Polychroniou and
                  Kenneth A. Ross},
  title        = {{VIP:} {A} {SIMD} vectorized analytical query engine},
  journal      = {{VLDB} J.},
  volume       = {29},
  number       = {6},
  pages        = {1243--1261},
  year         = {2020}
}

\end{document}